\newif\ifconferenceversion
\newif\iffullversion
\newcommand{\R}[0]{\mathds{R}}
\newcommand{\Z}[0]{\mathds{Z}}
\title{Visibility-Monotonic Polygon Deflation\thanks{\ifconferenceversion
    The full version of this paper is available
    at \protect\url{http://arxiv.org/abs/XXXX.XXXX}.  \fi School of
    Computer Science, Carleton University, {\tt \{jit, vida, nhoda,
    morin\}@scs.carleton.ca}}}
\author{Prosenjit Bose\and Vida Dujmovi{\'c}\and Nima Hoda\and Pat
  Morin}
\begin{document}

\ifconferenceversion
\thispagestyle{empty}
\fi

\maketitle

\begin{abstract}
  A \emph{deflated} polygon is a polygon with no visibility crossings.
  We answer a question posed by Devadoss et al. (2012) by presenting a
  polygon that cannot be deformed via continuous visibility-decreasing
  motion into a deflated polygon.  \iffullversion We show that the
  least $n$ for which there exists such an $n$-gon is seven.  \fi In
  order to demonstrate non-deflatability, we use a new combinatorial
  structure for polygons, the directed dual, which encodes the
  visibility properties of deflated polygons.  We also show that any
  two deflated polygons with the same directed dual can be deformed,
  one into the other, through a visibility-preserving deformation.
\end{abstract}

\section{Introduction}

Much work has been done on visibilities of polygons \cite{Ghosh07,
  ORourke87} as well as on their convexification, including work on
convexification through continuous motions \cite{Connely00}.  Devadoss
et al. \cite{Devadoss09} combine these two areas in asking the
following two questions: (1) Can every polygon be convexified through
a deformation in which visibilities monotonically increase?  (2) Can
every polygon be deflated (i.e. lose all its visibility crossings)
through a deformation in which visibilities monotonically decrease?

The first of these questions was answered in the affirmative at CCCG
2011 by Aichholzer et al. \cite{Aichholzer11b}.
\ifconferenceversion
In this paper we resolve the second question in the negative.  We also
introduce a combinatorial structure, the directed dual, which captures
the visibility properties of deflated polygons and we show that a
deflated polygon may be monotonically deformed into any deflated
polygon with the same directed dual.

\else 
In this paper, we resolve the second question in the negative by
presenting a non-deflatable polygon, shown in
Figure~\ref{fig:counter-poly:poly}.  While it is possible to use
\textit{ad hoc} arguments to demonstrate the non-deflatability of this
polygon, we develop a combinatorial structure, the directed dual, that
allows us to prove non-deflatability for this and other examples using
only combinatorial arguments.  We also show that seven is the least
$n$ for which there exists a non-deflatable $n$-gon in general
position.

As a byproduct of developing the directed dual, we obtain the
following additional results: (1) The vertex-edge visibility graph of
a deflated polygon is completely determined by its directed dual; and
(2) any deflated polygon may be monotonically deformed into any other
deflated polygon having the same directed dual.

\fi


\section{Preliminaries}

We begin by presenting some definitions.  Here and throughout the
paper, unless qualified otherwise, we take \emph{polygon} to mean
simple polygon on the plane.

A \emph{triangulation}, $T$, of a polygon, $P$, with vertex set $V$ is
a partition of $P$ into triangles with vertices in $V$.  The
\emph{edges} of $T$ are the edges of these triangles and we call such
an edge a \emph{polygon edge} if it belongs to the polygon or, else, a
\emph{diagonal}.  A triangle of $T$ with exactly one diagonal edge is
an \emph{ear} and the \emph{helix} of an ear is its vertex not
incident to any other triangle of $T$.


Let $w$ and $uv$ be a vertex and edge, respectively, of a polygon,
$P$, such that $u$ and $v$ are seen in that order in a
counter-clockwise walk along the boundary of $P$.  Then $uv$
is \emph{facing} $w$ if $(u,v,w)$ is a left turn.  Two vertices or a
vertex and an edge of a polygon are \emph{visible} or \emph{see} each
other if there exists a closed line segment contained inside the
closed polygon joining them.  If such a segment exists that intersects
some other line segment then they are visible \emph{through} the
latter segment.  We say that a polygon is in \emph{general position}
if the open line segment joining any of its visible pairs of vertices
is contained in the open polygon.

\begin{figure}[htb]
  \centering
  \subfloat[]{\label{fig:vv-vis-graph:poly}
    \includegraphics{figs/poly.mps}}
  \quad
  \subfloat[]{\label{fig:vv-vis-graph:vg}
    \includegraphics{figs/poly-vg.mps}}
  \caption{\protect\subref{fig:vv-vis-graph:poly} A polygon
    and \protect\subref{fig:vv-vis-graph:vg} its visibility graph.}
  \label{fig:vv-vis-graph}
\end{figure}

The \emph{visibility graph} of a polygon is the geometric graph on the
plane with the same vertex set as the polygon and in which two
vertices are connected by a straight open line segment if they are
visible (e.g. see Figure~\ref{fig:vv-vis-graph}).


\subsection{Polygon Deflation}

A \emph{deformation} of a polygon, $P$, is a continuous, time-varying,
simplicity-preserving transformation of $P$.  Specifically, to each
vertex, $v$, of $P$, a deformation assigns a continuous mapping
$t\mapsto v^t$ from the closed interval $[0,1]\subset\R$ to the plane
such that $v^0 = v$.  Additionally, for $t\in[0,1]$, $P^t$ is simple,
where $P^t$ is the polygon joining the images of $t$ in these mappings
as their respective vertices are joined in $P$.

A \emph{monotonic deformation} of $P$ is one in which no two vertices
ever become visible, i.e., there do not exist $u$ and $v$ in the
vertex set of $P$ and $s, t\in [0, 1]$, with $s < t$, such that $u^t$
and $v^t$ are visible in $P^t$ but $u^s$ and $v^s$ are not visible in
$P^s$.


A polygon is \emph{deflated} if its visibility graph has no edge
intersections.  Note that a deflated polygon is in general position
and that its visibility graph is its unique triangulation.  Because of
this uniqueness and for convenience, we, at times, refer to a deflated
polygon and its triangulation interchangeably.  A \emph{deflation} of
a polygon, $P$, is a monotonic deformation $t\mapsto P^t$ of $P$ such
that $P^1$ is deflated.  If such a deformation exists, then $P$ is
\emph{deflatable}.


\subsection{Dual Trees of Polygon Triangulations}

\begin{figure}[htb]
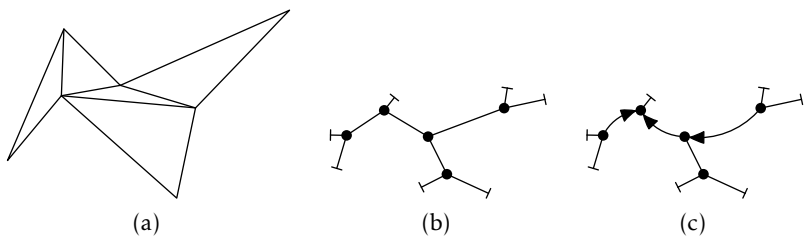

  \centering
  \subfloat[]{\label{fig:pt-dt-dd:tri}
    \includegraphics{figs/poly-tri.mps}}
  \quad
  \subfloat[]{\label{fig:pt-dt-dd:dual}
    \includegraphics{figs/poly-dual.mps}}
  \quad
  \subfloat[]{\label{fig:pt-dt-dd:dd}
    \includegraphics{figs/poly-dirdual.mps}}
  \caption{\protect\subref{fig:pt-dt-dd:tri} A polygon
    triangulation, \protect\subref{fig:pt-dt-dd:dual} its dual tree
    and \protect\subref{fig:pt-dt-dd:dd} its directed dual.  Triangle
    and terminal nodes are indicated with disks and tees,
    respectively.}
  \label{fig:pt-dt-dd}
\end{figure}

The \emph{dual tree}, $D$, of a polygon triangulation, $T$, is a plane
tree with a \emph{triangle node} for each triangle of $T$, a
\emph{terminal node} for each polygon edge of $T$ and where two nodes
are adjacent if their correspondents in $T$ share a common edge.  The
dual tree preserves edge orderings of $T$ in the following sense.  If
a triangle, $a$, of $T$ has edges $e$, $f$ and $g$ in
counter-clockwise order then the corresponding edges of its
correspondent, $a^D$, in $D$ are ordered $e^D$, $f^D$ and $g^D$ in
counter-clockwise order (e.g. see Figure~\ref{fig:pt-dt-dd:dual}).

Note that the terminal and triangle nodes of a dual tree have degrees
one and three, respectively.  We call the edges of terminal nodes
\emph{terminal edges}.


\begin{figure}[htb]
  \centering
  \includegraphics{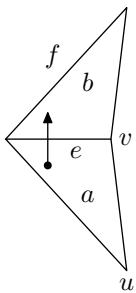}
  \caption{A pair of triangles, $a$ and $b$, sharing an edge, $e$,
    such that their quadrilateral union has a reflex vertex and a
    single segment path from $a$ to $b$ contained in the open
    quadrilateral.  The reflex endpoint, $v$, of $e$ is to the right
    of the path and so the pair $(a,b)$ is right-reflex.}
  \label{fig:rrefl}
\end{figure}

An ordered pair of adjacent triangles $(a,b)$ of a polygon
triangulation, $T$, is \emph{right-reflex} if the quadrilateral union
of $a$ and $b$ has a reflex vertex, $v$, situated on the right-hand
side of a single segment path from $a$ to $b$ contained in the open
quadrilateral.  We call $v$ the \emph{reflex endpoint} of the edge
shared by $a$ and $b$ (see Figure~\ref{fig:rrefl}).

The \emph{directed dual}, $D$, of a polygon triangulation, $T$, is a
dual tree of $T$ that is partially directed such that, for every
right-reflex pair of adjacent triangles $(a, b)$ in $T$, the edge
joining the triangle nodes of $a$ and $b$ in $D$ is directed
$a\rightarrow b$ (e.g. see Figure~\ref{fig:pt-dt-dd:dd}).  Note that
if $P$ is deflated, then for every pair of adjacent triangles, $(a,
b)$, of $T$ one of $(a, b)$ or $(b, a)$ is right-reflex and so every
non-terminal edge in $D$ is directed.

Throughout this paper, as above, we use superscripts to denote
corresponding objects in associated structures.  For example, if $a$
is a triangle of the triangulation, $T$, of a polygon and $b$ is a
triangle node in the dual tree, $D$, of $T$ then $a^D$ and $b^T$
denote the node corresponding to $a$ in $D$ and the triangle
corresponding to $b$ in $T$, respectively.


\section{Directed Duals of Deflated Polygons}

In this section, we derive some properties of deflated polygons and
use them to relate the visibilities of deflated polygons to paths in
their directed duals.  We also show that two deflated polygons with
the same directed dual can be monotonically deformed into one another.
\ifconferenceversion The proofs of Lemmas \ref{lem:cutear-defl},
\ref{lem:defl-vis-thr-uniq} and \ref{lem:defl-vis-thr} are not
difficult and can be found in the full version of this
paper \cite{Bose12}. \fi

\begin{lemma}
  \label{lem:cutear-defl}
  Let $P$ be a deflated polygon, let $a$ be an ear of $P$ and let $P'$
  be the polygon resulting from removing $a$ from $P$.  Then $P'$ is
  deflated.
\end{lemma}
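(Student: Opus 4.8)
The plan is to show that the visibility graph of $P'$ is a subgraph of the visibility graph of $P$, and then invoke the fact that a subgraph of a crossing-free geometric graph is itself crossing-free. The key geometric observation is that removing the ear $a$ only shrinks the polygon: if $h$ denotes the helix of $a$ and $pq$ its diagonal, then the boundary of $P$ near the ear reads $p, h, q$, and deleting $h$ replaces the two polygon edges $ph$ and $hq$ by the single polygon edge $pq$. Thus, as closed regions, $P' = P \setminus \mathring{a} \subseteq P$, where $\mathring{a}$ is the open ear triangle. That $P'$ is again a simple polygon is the standard fact that ear removal preserves simplicity, so the notion of deflatedness applies to $P'$.

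First I would establish $\mathrm{vis}(P') \subseteq \mathrm{vis}(P)$ on the common vertex set $V(P') = V(P) \setminus \{h\}$. If $u, v \in V(P')$ are visible in $P'$, then the closed segment $uv$ lies in the closed region $P'$; since $P' \subseteq P$, the segment $uv$ lies in $P$ as well, so $u$ and $v$ are visible in $P$. Hence every edge of the visibility graph of $P'$ is, as a straight segment between two fixed points, also an edge of the visibility graph of $P$. In particular, removing the ear can only destroy visibilities among the surviving vertices and never create new ones.

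The conclusion is then immediate. Because $P$ is deflated, no two edges of its visibility graph cross. Since the vertices do not move and the edge set of $\mathrm{vis}(P')$ is a subset of the edge set of $\mathrm{vis}(P)$, no two edges of $\mathrm{vis}(P')$ can cross either, as crossing is a property of the underlying straight segments, which are unchanged. Therefore $P'$ has no visibility crossings and is deflated.

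The only point requiring genuine care --- and the step I would treat as the main obstacle --- is the containment $P' \subseteq P$ together with the handling of boundary incidences in the definition of visibility (a closed segment lying in the closed polygon). Since $P$ is deflated it is in general position, which keeps the relevant segments off the boundary except at their endpoints and rules out degenerate alignments; this guarantees that the inclusion $\mathrm{vis}(P') \subseteq \mathrm{vis}(P)$ holds without pathological boundary cases. Once this inclusion is secured, the remainder is the purely combinatorial monotonicity of crossing-freeness under passing to a subgraph.
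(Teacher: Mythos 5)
Your proof is correct and follows essentially the same route as the paper's: observe that $P' \subseteq P$, hence every visibility in $P'$ is a visibility in $P$, so a crossing in the visibility graph of $P'$ would force one in that of $P$. The paper states this in two sentences; your additional care about simplicity of $P'$ and boundary incidences is fine but not a departure in approach.
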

\iffullversion
\begin{proof}
  $P'$ is a subset of $P$, so if a vertex pair is visible in $P'$ then
  the corresponding pair is visible in $P$.  Then a crossing in the
  visibility graph of $P'$ would imply one in that of $P$.
\end{proof}
\fi

\begin{cor}
  \label{cor:subpoly-defl}
  If the union of a subset of the triangles of a deflated polygon
  triangulation is a polygon, then it is deflated.
\end{cor}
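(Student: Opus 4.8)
The plan is to prove the corollary by reducing it to Lemma~\ref{lem:cutear-defl} via an inductive ear-removal argument. Let $P$ be a deflated polygon with triangulation $T$, and let $S$ be a subset of the triangles of $T$ whose union $Q$ is itself a polygon. I want to show $Q$ is deflated. The natural strategy is to realize $Q$ as the endpoint of a sequence of ear removals starting from $P$: since $T$ restricted to $S$ is a triangulation of $Q$, I should repeatedly strip away the triangles in $T\setminus S$ one at a time, each time removing an ear, so that Lemma~\ref{lem:cutear-defl} applies at every step and preserves deflatedness throughout.

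First I would argue that the triangles lying outside $S$ can always be peeled off in ear order. Consider the dual tree $D$ of $T$: the triangles of $S$ correspond to a connected subtree $D_S$ (connectivity follows from $Q$ being a single polygon, hence its sub-triangulation has a connected dual), and the triangles of $T\setminus S$ form the complement. In any tree, the complement of a connected subtree can be removed leaf-by-leaf while staying a valid forest hanging off $D_S$; a leaf of the current dual tree that lies outside $S$ corresponds precisely to an ear of the current polygon. So at each stage there exists an ear whose triangle is not in $S$, and removing it leaves a polygon still containing all of $S$.

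Then I would set up the induction on $|T\setminus S|$, the number of triangles to be discarded. The base case $|T\setminus S|=0$ gives $Q=P$, which is deflated by hypothesis. For the inductive step, pick an ear $a$ of the current deflated polygon $P_i$ with $a\notin S$ (guaranteed by the preceding paragraph), apply Lemma~\ref{lem:cutear-defl} to conclude that $P_{i+1}=P_i\setminus a$ is deflated, and note that $P_{i+1}$ still has $S$ among its triangles with one fewer extraneous triangle, so the inductive hypothesis finishes the job and yields that $Q$ is deflated.

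The main obstacle I expect is the combinatorial claim that an \emph{external} ear always exists, i.e.\ that among the triangles not in $S$ there is one forming an ear of the current polygon. This is where I must be careful: it is not merely that the polygon has an ear (every triangulated polygon with more than one triangle has at least two ears), but that such an ear can be chosen \emph{outside} $S$. The clean way to secure this is the dual-tree viewpoint above: since $D_S$ is connected and $T\setminus S$ is nonempty, the forest $D\setminus D_S$ has a leaf in $D$, and that leaf is necessarily a triangle node of degree one in the current tree, hence an ear of the current polygon. Making that leaf-versus-ear correspondence precise, and confirming that removing it keeps $S$'s union equal to $Q$, is the crux; the rest is routine induction.
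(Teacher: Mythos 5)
Your proof is correct: the external-ear claim is the only real content, and your dual-tree argument for it is sound (root the dual tree at a node of the connected subtree $D_S$; since no node of $D_S$ has an ancestor outside $D_S$, a deepest node outside $D_S$ has no children and is therefore a leaf, i.e.\ an ear of the current polygon), after which Lemma~\ref{lem:cutear-defl} and induction finish the job. The paper, however, states the corollary with no proof at all, and the reason is that the one-line argument proving Lemma~\ref{lem:cutear-defl} applies verbatim to the sub-polygon $Q$ directly: $Q$ is a subset of $P$ whose vertices are vertices of $P$, so any pair visible in $Q$ is visible in $P$, and hence a crossing in the visibility graph of $Q$ would force a crossing in that of $P$, contradicting deflatedness. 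That direct route needs no induction, no dual-tree combinatorics, and no existence of an ear outside $S$ --- the ``crux'' you identified simply never arises. What your route buys is that it uses the lemma strictly as a black box, and the leaf-peeling fact is a reusable observation; what it costs is considerably more machinery for a statement the subset argument disposes of immediately. One small caution if you keep your version: in this paper's dual tree, terminal nodes (polygon edges) are also degree-one nodes, so ``leaf of the dual tree'' must be read as ``triangle node adjacent to exactly one other triangle node'', i.e.\ you should work in the triangle-adjacency tree rather than the paper's dual tree as defined.
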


\begin{lemma}
  \label{lem:defl-vis-thr-uniq}
  If $u$ is a vertex opposite a closed edge, $e$, in a triangle of a
  deflated polygon triangulation, then $u$ sees exactly one polygon
  edge through $e$.
\end{lemma}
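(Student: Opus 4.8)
The plan is to track the pencil of sightlines from $u$ that pass through the open interior of $e$ and to show that, as a consequence of deflation, this pencil ``funnels'' onto a single polygon edge. First I would dispose of the degenerate case in which $e$ is itself a polygon edge: any segment from $u$ meeting the relative interior of $e$ immediately leaves $P$, so the only polygon edge $u$ can see through $e$ is $e$ itself (the witnessing segment meeting $e$ at its endpoint), and the count is one. Henceforth assume $e$ is a diagonal, write $b,c$ for its endpoints, and let $P_2$ be the subpolygon lying on the far side of $e$ from $u$ (which is deflated by Corollary~\ref{cor:subpoly-defl}, though I shall not need this). Let $t'=bcd$ be the triangle of the triangulation sharing $e$, with apex $d$.

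The crux is the following observation, which is where deflation enters. In a deflated polygon the visibility graph coincides with the triangulation, so $e$ is an edge of the visibility graph. Hence $u$ cannot see the apex $d$: if it could, then $ud$ would be a visibility edge, and since $d$ lies on the far side of the chord $e$ from $u$, the segment $ud$ would properly cross the open edge $e$, producing two crossing edges of the visibility graph and contradicting that $P$ is deflated. The same argument shows that no vertex $w$ of $P$ reached from $u$ by a sightline crossing the open interior of $e$ can be visible to $u$, since $uw$ would again cross the visibility edge $e$.

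Granting this, I would argue as follows. Because $d$ is invisible to $u$, the direction from $u$ to $d$ falls outside the open angular cone that $e$ subtends at $u$; equivalently, $d$ lies beyond the reflex endpoint of $e$ in the sense of the right-reflex structure of Figure~\ref{fig:rrefl}. Consequently the entire cone of directions through $e$ is contained in the angular span of the single edge of $t'$ incident to the non-reflex endpoint, so every sightline from $u$ through $e$ passes through that one edge. If that edge is a polygon edge, we are done; otherwise it is a diagonal, and I would recurse into the next triangle, maintaining the invariant that the whole cone passes through one edge of each successive triangle. At each step the newly encountered apex is again invisible to $u$ by the crux observation, so the cone never splits; since $P_2$ has finitely many triangles, the funnel terminates at a unique polygon edge, yielding existence and uniqueness simultaneously.

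The main obstacle is to make the funnel rigorous: one must verify the inductive invariant that ``apex invisible to $u$'' forces the cone to lie within the angular span of a single edge, and hence to cross that edge entirely, and that every apex met along the funnel is genuinely reached by a sightline through $e$ so that the crossing argument applies. A clean alternative that sidesteps the bookkeeping is a sweeping argument: as the sightline direction sweeps the cone, the first polygon edge it meets beyond $e$ can change only at a direction where the sightline grazes a polygon vertex $w$ lying beyond $e$; but then $u$ would see $w$ and $uw$ would cross $e$, contradicting deflation. Hence the first edge met is constant across the cone, which is exactly the assertion.
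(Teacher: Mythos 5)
Your proposal rests on the same key mechanism as the paper's proof: in a deflated polygon the visibility graph \emph{is} the triangulation, so a vertex seen by $u$ along a segment crossing the open interior of the diagonal $e$ would give two crossing visibility edges. In fact, your closing ``sweep'' alternative is essentially the paper's uniqueness argument verbatim (the paper argues: if $u$ saw two polygon edges through $e$, it would see some vertex through $e$, a contradiction). You differ in packaging: the paper proves existence separately and very cheaply by boundedness (a sufficiently long segment from $u$ through $e$ must leave the polygon, and its exit point must lie on an open polygon edge), whereas your primary route obtains existence and uniqueness together by funnelling the cone through successive triangles. That induction essentially re-derives machinery the paper develops later and separately, in Lemma~\ref{lem:defl-vis-thr} and Corollary~\ref{cor:ind-path-rec}; it proves more than this lemma needs, at the cost of the bookkeeping you yourself acknowledge. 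Note also that your sweep variant silently assumes each sightline meets a first polygon edge beyond $e$; that is exactly the boundedness point the paper makes explicit.

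One step of your funnel is misjustified as written. You claim $u$ cannot see the apex $d$ because, $d$ being on the far side of the chord $e$, the segment $ud$ ``would properly cross the open edge $e$.'' That inference is false: $ud$ crosses the supporting \emph{line} of $e$, but in the right-reflex configuration characteristic of deflated polygons it does so beyond the reflex endpoint, missing the open edge entirely (this is precisely the geometry of Figure~\ref{fig:rrefl}). The conclusion is still true, but for a different reason: $ud$ is not a triangulation edge, and in a deflated polygon any visibility edge is a triangulation edge. Alternatively, and more in the spirit of your argument, you only need the composite implication that your next sentence supplies: if the direction $u\to d$ lay in the open cone of $e$, then by convexity of the two triangles the segment $ud$ would lie in the closed polygon \emph{and} cross open $e$, yielding the forbidden crossing; hence the direction of $d$ is outside the cone and the cone exits $t'$ through a single edge. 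With that local repair (which uses only pieces already present in your write-up), your argument goes through.
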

\iffullversion
\begin{proof}
  If $e$ is a polygon edge then $u$ sees no other edge through $e$
  than $e$ itself.  Otherwise, if $u$ saw more than one polygon edge
  through $e$, it would also see some vertex through $e$, implying a
  visibility crossing in the visibility graph of the deflated
  polygon---a contradiction.
  
  Now, since the polygon is bounded, a sufficiently long open line
  segment starting on $u$ and intersecting $e$ must contain points
  both interior and exterior to the polygon.  Then it must intersect
  the polygon boundary and, since the polygon is deflated, the
  intersection point must be on an open polygon edge visible to $u$.
\end{proof}
\fi

\begin{figure}[htb]
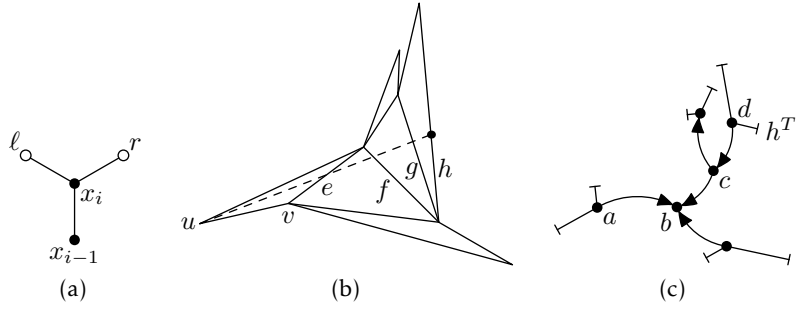

  \centering
  \subfloat[]{\label{fig:vis-path:recur}
    \includegraphics{figs/vispath-iter.mps}} \quad
  \subfloat[]{\label{fig:vis-path:indseq}
    \includegraphics{figs/defl-poly-ind-path.mps}} \quad
  \subfloat[]{\label{fig:vis-path:eg}
    \includegraphics{figs/defl-poly-ddual-vp.mps}}
  \caption{\protect\subref{fig:vis-path:recur} A node $x_i$ of a
    directed dual and its neighbours $x_{i-1}$, $r$ and $\ell$ in an
    iteration of the construction of a visibility
    path, \protect\subref{fig:vis-path:indseq} a deflated polygon
    triangulation, $T$, wherein the induced sequence of the vertex $u$
    through the edge $e$ is $(e$, $f$, $g$, $h)$
    and \protect\subref{fig:vis-path:eg} the directed dual, $T$, in
    which the visibility path of the directed dual starting with nodes
    $(a,b)$ is $(a$, $b$, $c$, $d$, $h^T)$.}
  \label{fig:vis-path}
\end{figure}

Let $u$ be the vertex of a deflated polygon triangulation, $T$, and
let $e$ be an edge opposite $u$ in a triangle of $T$.
An \emph{induced sequence} of $u$ through $e$ is the sequence of edges
through which $u$ sees a polygon edge, $f$, through $e$.  This
sequence is ordered by the proximity to $u$ of their intersections
with a closed line segment joining $u$ and $f$ that is interior to the
open polygon everywhere but at its endpoints (e.g. see
Figure~\ref{fig:vis-path:indseq}).

\begin{lemma}
  \label{lem:defl-vis-thr}
  Suppose $u$ is a vertex opposite a closed non-polygon edge, $e$, in
  a triangle, $a$, of a deflated polygon triangulation.  Let $v$ be
  the reflex endpoint of $e$ and let $f$ be the edge opposite $v$ in
  the triangle sharing $e$ with $a$ (see Figure~\ref{fig:rrefl}).
  Then $u$ sees the same polygon edge through $e$ as $v$ sees through
  $f$.
\end{lemma}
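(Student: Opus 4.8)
The plan is to reduce everything to the uniqueness statement of Lemma~\ref{lem:defl-vis-thr-uniq}. Let $b$ be the triangle sharing $e$ with $a$, let $w$ be its third vertex, and let $\beta$ be the non-reflex endpoint of $e$, so that $e=v\beta$, the edge $f=\beta w$ is opposite $v$ in $b$, and $h=vw$ is the remaining edge of $b$ (opposite $\beta$). By Lemma~\ref{lem:defl-vis-thr-uniq}, $u$ sees a unique polygon edge $g$ through $e$ and $v$ sees a unique polygon edge $g'$ through $f$; hence it suffices to prove $g=g'$, and by this uniqueness it is enough to exhibit a single line segment from $v$ crossing $f$ and reaching $g$. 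Fix a witnessing segment $s$ from $u$ to a point $y\in g$ that is interior to the polygon except at its endpoints and crosses $e$ at an interior point $x_e$.

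The first step is to show that $s$ crosses $f$. After meeting $e$, the segment $s$ enters $b$ and must leave it through either $f$ or $h$. Because $s$ crosses the open edge $e=v\beta$, the vertices $v$ and $\beta$ lie on opposite sides of the line through $s$. I would then use the reflex position of $v$ (as in Figure~\ref{fig:rrefl}) to argue that $w$ lies on the \emph{same} side of $s$ as $v$; consequently the edge $h=vw$ does not meet $s$, and $s$ must exit $b$ through $f$, say at the interior point $x_f$. Note that if $f$ is a polygon edge this already finishes the proof: $s$ then reaches the boundary at $f$, so $g=f$, while $v$ sees $f$ through $f$ by Lemma~\ref{lem:defl-vis-thr-uniq}, giving $g'=f=g$.

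It remains to treat the case in which $f$ is a diagonal, where the portion of $s$ from $x_f$ to $y$ lies in the deflated sub-polygon on the far side of $f$ (Corollary~\ref{cor:subpoly-defl}). Here I would establish that the straight segment $vy$ stays inside $P$ and crosses $f$: the segment $vx_f$ lies inside $b$, the segment $x_fy$ lies beyond $f$, and cutting the corner at $x_f$ keeps us on the interior side precisely because $v$ is the reflex vertex, which forces $y$ to lie in the cone of rays emanating from $v$ through the open edge $f$. Deflation then rules out $vy$ meeting the polygon boundary, so $v$ sees $y\in g$ through $f$; by the uniqueness in Lemma~\ref{lem:defl-vis-thr-uniq} this yields $g'=g$.

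The main obstacle is this last transfer of visibility from $u$ to $v$. Visibility is not transitive, so knowing that $x_f$ sees $g$ and that $v$ sees $x_f$ does not by itself give that $v$ sees $g$; the real content is to verify that $y$ lies in the cone of directions from $v$ through $f$ and that the segment $vy$ does not escape the (possibly highly non-convex) region on the far side of $f$. Both facts hinge on the reflex placement of $v$ relative to $s$ and on the absence of visibility crossings furnished by deflation, and these are the points I expect to require the most care.
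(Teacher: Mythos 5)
Your reduction to Lemma~\ref{lem:defl-vis-thr-uniq} is the right frame, and your first step is sound and completable: since $v$ is reflex, $u$ and the interior of $e$ lie on opposite sides of the line through $v$ and $w$, so a witnessing segment $s$ from $u$ through the interior of $e$ crosses that line before reaching $e$ and never afterwards; hence $s$ exits $b$ through the interior of $f$, and the case where $f$ is a polygon edge follows. Even your claim that $y$ lies in the cone of rays from $v$ through the open edge $f$ can be proved by the same kind of reflexness bookkeeping. The fatal gap is exactly where you flag it: the assertion that ``deflation then rules out $vy$ meeting the polygon boundary'' is not an argument, and in your setup it \emph{is} the lemma. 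Knowing that the segment from $x_f$ to $y$ stays interior says nothing direct about the different segment $vy$. To rule out a blocking edge one has to argue roughly as follows: any edge $g'$ blocking $vy$ at a first hit $z'$ is, by Lemma~\ref{lem:defl-vis-thr-uniq}, the unique edge $v$ sees through $f$; since $g'$ can cross neither the interior segment $s$ nor the interior of $f$, it must terminate at a polygon vertex $p$ strictly inside the triangle $x_f\,q\,y$ (where $q$ is the crossing of $vy$ with $f$); any such $p$ lies in the cone of $v$ through the open edge $f$, so either $v$ sees $p$ --- a visibility edge crossing the diagonal $f$, contradicting deflation --- or the segment from $v$ to $p$ is blocked again by $g'$, forcing $v$ collinear with $g'$, which general position excludes. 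None of this chain is in your proposal, it is longer and more delicate than the proof you set out to give, and the tempting shortcut (``$v$ sees a unique edge through $f$, and that edge is $g$'') is circular, since that identity is the statement being proved.

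The paper sidesteps the transfer problem entirely by choosing the witnessing segment to be collinear with $v$ from the start: it uses the ray from $u$ \emph{through} $v$. Because $v$ is the reflex endpoint of $e$, this ray meets the closed edge $e$ (at $v$), enters $b$, exits through the interior of $f$, and then first hits the boundary on some polygon edge; the sub-segment from $v$ onward automatically witnesses that $v$ sees that same edge through $f$, so no visibility needs to be transferred between two different segments, and Lemma~\ref{lem:defl-vis-thr-uniq} applied to $u$ (through $e$) and to $v$ (through $f$) finishes the proof in three lines. If you replace your arbitrary segment $s$ by this one ray, your outline collapses into the paper's proof; with an arbitrary $s$, the step you defer is a genuine missing proof, not a detail to be checked.
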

\iffullversion
\begin{proof}
  The ray from $u$ through $v$ intersects $f$.  Then, if $f$ is a
  polygon edge, $u$ sees it.  Otherwise, $f$ is a diagonal and the ray
  intersects some other polygon edge visible to both $u$ and $v$.
  From Lemma~\ref{lem:defl-vis-thr-uniq}, we have the uniqueness of
  the edge $u$ sees through $f$, which completes the proof.
\end{proof}
\fi

\begin{cor}
  \label{cor:ind-path-rec}
  If $u$, $v$, $e$ and $f$ are as in Lemma~\ref{lem:defl-vis-thr},
  then the induced sequence of $u$ through $e$ is equal to that of $v$
  through $f$ prepended with $e$.
\end{cor}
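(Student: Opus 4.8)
The plan is to prove Corollary~\ref{cor:ind-path-rec} directly from Lemma~\ref{lem:defl-vis-thr} by unwinding the definition of the induced sequence.

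Let me recall what needs to be shown. The induced sequence of $u$ through $e$ is the ordered list of edges that a witnessing segment from $u$ to the polygon edge $f$ it sees through $e$ crosses, ordered by proximity to $u$. Lemma~\ref{lem:defl-vis-thr} tells me that $u$ sees the same polygon edge through $e$ as $v$ sees through $f$ (here $f$ is the edge opposite the reflex endpoint $v$, not the seen polygon edge — I need to be careful with the notation clash). So both induced sequences terminate at the same polygon edge.

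Let me sketch the key steps.

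First I would fix the witnessing segment. Let $g$ be the common polygon edge that $u$ sees through $e$ and that $v$ sees through $f$, whose existence and uniqueness are guaranteed by Lemmas~\ref{lem:defl-vis-thr-uniq} and~\ref{lem:defl-vis-thr}. I would take the segment $s$ from $v$ to $g$ that is interior to the open polygon except at its endpoints; its induced sequence is, by definition, the ordered list of edges $s$ crosses.

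Second I would argue that the segment from $u$ to $g$ can be taken to cross $e$ first and then to coincide with (or be arbitrarily close to) the tail of $s$. Concretely, since $v$ is the reflex endpoint of $e$ and $f$ is opposite $v$ in the triangle across $e$ from $a$, the triangle $a$ together with that adjacent triangle form a quadrilateral in which the single-segment path from $a$ crosses $e$ and exits through $f$ on the side determined by the right-reflex condition. I would use this to show that a witnessing segment for $u$ through $e$ enters triangle $a$, crosses $e$, then passes through the adjacent triangle and crosses $f$ exactly as $v$'s segment does — so after the initial crossing of $e$, the two segments traverse the same sequence of diagonals. Because of the uniqueness from Lemma~\ref{lem:defl-vis-thr-uniq}, the edge crossed at each stage is forced, so the tail of $u$'s induced sequence literally equals the induced sequence of $v$ through $f$.

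Third I would assemble the ordering. The first edge $u$'s witnessing segment crosses is $e$ itself (it is opposite $u$ in triangle $a$, so the segment leaves $a$ through $e$), and every subsequent crossing matches $v$'s sequence through $f$ in the same proximity order. Hence the induced sequence of $u$ through $e$ is exactly $(e)$ followed by the induced sequence of $v$ through $f$, which is the claimed prepending.

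The main obstacle I anticipate is making rigorous the claim that the two witnessing segments can be taken to share the same sequence of crossings after $e$ — i.e., that the geometry genuinely forces $u$'s segment, once through $e$, to follow $v$'s path rather than veering into a different fan of triangles. I expect to handle this by leaning on the deflated hypothesis (via Corollary~\ref{cor:subpoly-defl} and the uniqueness in Lemma~\ref{lem:defl-vis-thr-uniq}), which guarantees that the sequence of edges pierced on the way to a fixed polygon edge is uniquely determined, so there is no freedom for the paths to diverge.
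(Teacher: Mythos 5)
Your overall skeleton is the right one and matches the paper's (very terse) proof: use Lemma~\ref{lem:defl-vis-thr} to identify the common target polygon edge $g$, observe that the witness from $u$ crosses $e$ first, and then argue that the remaining crossings coincide with those of $v$'s witness. The first step of your forcing argument is also sound: because $u$ sits at the apex of the angle of $a$ at the reflex endpoint $v$, the right-reflex condition (the angles of $a$ and $b$ at $v$ sum to more than $\pi$) really does prevent a segment from $u$ that crosses $e$ from next crossing the edge of $b$ incident to $v$; it must cross or end on $f$.

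The gap is in how you force all the \emph{later} crossings. You attribute this to Lemma~\ref{lem:defl-vis-thr-uniq}, but that lemma only asserts uniqueness of the polygon edge seen through a given edge; it says nothing about which diagonals a witness pierces on the way there, and ``the sequence of edges pierced on the way to a fixed polygon edge is uniquely determined'' is precisely what remains to be proved, not a consequence of the cited lemmas. Moreover, the corner argument does not extend past the first step: it requires the segment to emanate from the apex vertex of the wedge at the reflex endpoint, whereas after crossing $f$ the witness from $u$ enters the next triangle at an arbitrary interior point of $f$, and from such a point the local right-reflex geometry does not forbid exiting through the ``wrong'' edge. (This is also why the analogous forcing does work for $v$'s own witness one step further: $v$ is the apex of $b$'s angle at the reflex endpoint of $f$.) The missing idea that closes the gap is elementary but different from what you cite: a witness segment lies in the open polygon except at its endpoints, hence passes through no vertex, and being straight it crosses each triangulation edge at most once; consequently the triangles it visits form a path in the dual \emph{tree}. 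By uniqueness of tree paths, the crossings of $u$'s witness after $e$ and the crossings of $v$'s witness are both exactly the edge sequence of the unique dual path from $b$'s node to the terminal node of $g$, where $g$ is common to both by Lemma~\ref{lem:defl-vis-thr}. With that observation substituted for your ``uniqueness'' step, your plan goes through and is essentially the argument the paper intends by its citation of Lemma~\ref{lem:defl-vis-thr} and Corollary~\ref{cor:subpoly-defl}.
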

\iffullversion
\begin{proof}
  This follows from Lemma~\ref{lem:defl-vis-thr} and
  Corollary~\ref{cor:subpoly-defl}.
\end{proof}
\fi


\subsection{Directed Duals and Visibility}

A \emph{visibility path}, $(x_1$, $x_2$, \ldots, $x_n)$, of
the \emph{directed dual}, $D$, of a deflated polygon is a sequence of
nodes in $D$ meeting the following conditions.  $x_1$ is a triangle
node adjacent to $x_2$ and, for $i\in\{2,\ldots,n\}$, if $x_i$ is a
terminal node, then it is $x_n$---the final node of the path.
Otherwise, let the neighbours of $x_i$ be $x_{i-1}$, $r$ and $\ell$ in
counter-clockwise order (see Figure~\ref{fig:vis-path:recur}).  Then
\[ x_{i+1} = 
\begin{cases}
  r & \text{if edge $\{x_{i-1}, x_i\}$ is directed $x_{i-1}\leftarrow x_i$} \\
  \ell & \text{if edge $\{x_{i-1}, x_i\}$ is directed $x_{i-1}\rightarrow x_i$} \\
\end{cases}
\]
(e.g. see Figure~\ref{fig:vis-path:eg}).

\iffullversion
Note that two consecutive nodes of a visibility path determine all
subsequent nodes and so any suffix of length greater than one of a
visibility path is also a visibility path.
\fi

\begin{lemma}
  \label{lem:vis-path-geom}
  Let $(a,b,c)$ be a simple path in the directed dual, $D$, of a
  deflated polygon triangulation, $T$, where $a$ and $b$ are triangle
  nodes joined by the edge $e$.  Let $u$ be the vertex opposite $e^T$
  in $a^T$, let $v$ be the reflex endpoint of $e^T$ and let $f$ be the
  edge opposite $v$ in $b^T$ (see Figure~\ref{fig:vis-path:indseq}).
  Then $(a,b,c)$ is the substring of a visibility path if and only if
  $f^D$ joins $b$ and $c$ in $D$.
\end{lemma}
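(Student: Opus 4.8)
The plan is to reduce the claimed biconditional to a purely combinatorial statement about the successor rule of a visibility path, and then to verify that rule against the local geometry of $b^T$ by an orientation computation. Since $a$ and $b$ are both triangle nodes, $e^T$ is a diagonal, so (as $P$ is deflated, one of $(a,b),(b,a)$ is right-reflex, as noted after the definition of the directed dual) the reflex endpoint $v$ of $e^T$ is well defined; write $p$ for the other endpoint of $e^T$ and $z$ for the apex of $b^T$, so that $f=pz$ is the edge opposite $v$ and $g=vz$ is the edge opposite $p$. Because the successor rule determines the node following $b$ uniquely from the predecessor $a$, the triple $(a,b,c)$ is a substring of a visibility path if and only if $c$ is exactly the neighbour the rule selects at $b$. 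Hence it suffices to show that this rule-selected neighbour is always the one reached across $f^D$, i.e. the $c$ with $f^D=\{b,c\}$.

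First I would record two dictionary entries. (i) Direction of $e$ versus the side of $v$: by the definitions of right-reflex and of the directed dual, $e$ is directed $a\rightarrow b$ exactly when $(a,b)$ is right-reflex, i.e. when $v$ lies to the right of the single-segment $a$-to-$b$ crossing of $e^T$, and $e$ is directed $b\rightarrow a$ exactly when $v$ lies to its left. (ii) Position of $f^D$ versus the orientation of $b^T$: since the dual tree preserves edge orderings, the counter-clockwise cyclic order of $e^D,f^D,g^D$ about $b$ equals that of the sides $vp,pz,vz$ about $b^T$. If the vertices of $b^T$ are $v,p,z$ in counter-clockwise order, the sides in counter-clockwise order are $vp,pz,zv$, so the neighbours in counter-clockwise order from $a$ are $a$, the neighbour across $f^D$, then the neighbour across $g^D$; thus the first of $r,\ell$ --- namely $r$ --- is the neighbour across $f^D$. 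If instead the vertices are $v,z,p$ in counter-clockwise order, the sides read $vz,zp,pv$ and symmetrically $\ell$ is the neighbour across $f^D$.

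The linking step ties the orientation of $b^T$ to the side of $v$. Crossing from $a^T$ into $b^T$ we enter the interior of $b^T$ through $e^T=vp$, and the interior lies to the left of the counter-clockwise boundary of $b^T$. So when the vertices are $v,p,z$ in counter-clockwise order the boundary directs $e^T$ as $v\rightarrow p$ and the crossing direction is a left-normal of $v\rightarrow p$; a one-line cross-product check then shows $v$ lies to the left of the directed crossing segment, while in the $v,z,p$ case $v$ lies to its right. Combining the three pieces: when $b^T$ is oriented $v,p,z$ we get $v$ on the left, hence $e$ directed $b\rightarrow a$, hence the rule selects $r$, which is the neighbour across $f^D$; when $b^T$ is oriented $v,z,p$ we get $v$ on the right, $e$ directed $a\rightarrow b$, the rule selects $\ell$, again the neighbour across $f^D$. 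In either case the rule-selected successor is the neighbour across $f^D$, which is exactly the assertion. (Lemma~\ref{lem:defl-vis-thr} is the geometric reason $f$ is the ``correct'' next edge --- the line of sight from $u$ through $e^T$ grazes $v$ and exits through $f$ --- but the formal argument above needs only the definitions and the order-preservation of the dual.)

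The main obstacle is chirality bookkeeping: correctly calibrating ``$v$ on the right/left'' of the crossing, the counter-clockwise order of the sides of $b^T$, and the $r$-versus-$\ell$ choice so that all three line up. I would guard against sign errors by checking the whole chain on one explicit configuration (an arrowhead quadrilateral with $v$ at its reflex vertex), which fixes every convention at once.
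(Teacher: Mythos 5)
Your proof is correct and follows essentially the same route as the paper's: both arguments chain together the definition of right-reflexness (which side of the crossing $v$ lies on), the edge-order preservation of the dual tree, and the $r$-versus-$\ell$ successor rule, split into the same two chirality cases. The only differences are organizational---you case on the orientation of $b^T$ and reduce the biconditional to the determinism of the successor rule, whereas the paper cases on the counter-clockwise order of the neighbours of $b$ and handles the converse by ``reversing the argument''---but the mathematical content is identical.
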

\begin{proof}
  Suppose $(a,b,c)$ is the substring of a visibility path and let $x$
  be the neighbour of $b$ not $a$ nor $c$ and let $x'$ be the edge of
  $b^T$ not $e^T$ nor $f$. We consider the case where the neighbours
  of $b$ are $a$, $x$ and $c$ in counter-clockwise order---the
  argument is symmetric in the other case.  Then $(a,b)$ is
  right-reflex and so $b^T$ has counter-clockwise edge ordering:
  $e^T$, $x'$, $f$.  Then, since edge orderings are preserved in the
  directed dual, $f^D$ joins $b$ and $c$ as required.  Reversing the
  argument gives the converse.
\end{proof}

\begin{cor}
  \label{cor:ind-path-vis}
  Let $D$, $T$, $a$, $b$, $e$ and $u$ be as in
  Lemma~\ref{lem:vis-path-geom}.  The induced sequence of $u$ through
  $e$ is equal to the sequence of correspondents in $T$ of edges
  traversed by the visibility path starting with $(a,b)$ in $D$.  The
  final node of this visibility path corresponds to the edge $u$ sees
  through $e^T$.
\end{cor}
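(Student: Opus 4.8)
The plan is to prove both assertions at once by induction on the length of the induced sequence of $u$ through $e^T$, exploiting the fact that this sequence and the visibility path obey parallel recurrences: Corollary~\ref{cor:ind-path-rec} peels an edge off the front of the induced sequence, while the note preceding the lemma lets me peel an edge off the front of a visibility path (any suffix of length greater than one is again a visibility path). Since the polygon is bounded, the induced sequence is finite, so the induction is well-founded. Throughout, I track the first traversed edge $e$ of the path, whose correspondent is $e^T$, against the first entry $e^T$ of the induced sequence.

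For the base case I take $f$ to be a polygon edge. Then $v$ sees $f$ through $f$ itself by Lemma~\ref{lem:defl-vis-thr-uniq}, so by Lemma~\ref{lem:defl-vis-thr} the vertex $u$ sees $f$ through $e^T$, and the induced sequence of $u$ through $e^T$ is $(e^T, f)$. On the path side, Lemma~\ref{lem:vis-path-geom} forces the node $c$ following $(a,b)$ to satisfy that $f^D$ joins $b$ and $c$; because $f$ is a polygon edge, $f^D$ is a terminal edge and $c$ is a terminal node, so the visibility path is exactly $(a,b,c)$. It traverses the edges $e$ and $f^D$, whose correspondents $(e^T, f)$ match the induced sequence, and its final node $c$ corresponds to $f$, the edge $u$ sees through $e^T$.

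For the inductive step $f$ is a diagonal. Here Corollary~\ref{cor:ind-path-rec} gives that the induced sequence of $u$ through $e^T$ is the induced sequence of $v$ through $f$ with $e^T$ prepended. In parallel, Lemma~\ref{lem:vis-path-geom} selects $c$ so that $f^D$ joins $b$ and $c$, and stripping the first edge $e$ exhibits the remainder of the path as the visibility path starting with $(b,c)$. The key observation is that this remainder is precisely an instance to which the induction hypothesis applies: reindexing $(a,b)\mapsto(b,c)$, the joining edge is $f^D$ and the vertex opposite $f$ in $b^T$ is exactly $v$, so the hypothesis yields that the induced sequence of $v$ through $f$ equals the correspondents of the edges traversed by the path from $(b,c)$, with final node corresponding to the edge $v$ sees through $f$. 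Prepending $e^T$ on both sides, and invoking Lemma~\ref{lem:defl-vis-thr} to identify the edge $u$ sees through $e^T$ with the edge $v$ sees through $f$, closes the induction.

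The main obstacle I anticipate is the bookkeeping in this alignment: I must verify that the reflex endpoint $v$ and opposite edge $f$ that drive the induced-sequence recursion of Corollary~\ref{cor:ind-path-rec} are exactly the data the visibility-path rule of Lemma~\ref{lem:vis-path-geom} uses to pick the successor $c$, and that the vertex and edge roles survive the reindexing $(a,b)\mapsto(b,c)$ intact. Once this correspondence between the two recurrences is pinned down, both claimed equalities follow by the routine induction sketched above.
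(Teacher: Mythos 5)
Your proof is correct and takes essentially the same route as the paper: the paper's own proof is precisely the one-line observation that the claim ``follows, by induction, from Lemma~\ref{lem:vis-path-geom} and Corollary~\ref{cor:ind-path-rec},'' which is the induction you have carried out in full. Your base case (terminal node) and inductive step (peeling the first edge off both the induced sequence and the visibility path, using the suffix property) supply exactly the details the paper leaves implicit.
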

\iffullversion
\begin{proof}
  This follows, by induction, from Lemma~\ref{lem:vis-path-geom} and
  Corollary~\ref{cor:ind-path-rec}.
\end{proof}
\fi

\begin{theorem}
  \label{thm:dd-det-ve}
  A vertex, $u$, and edge, $g$, of a deflated polygon, $P$, are
  visible if and only if there is a visibility path in the directed
  dual, $D$, of the triangulation, $T$, of $P$ starting on a triangle
  node corresponding to a triangle incident to $u$ and ending on
  $g^D$.
\end{theorem}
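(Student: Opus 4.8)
The plan is to understand the theorem through the fan of triangles incident to $u$ and to reduce every case to Corollary~\ref{cor:ind-path-vis}. Write $t_1,\dots,t_k$ for the triangles of $T$ incident to $u$ in counter-clockwise order, let $d_0$ and $d_k$ be the two polygon edges at $u$ and $d_1,\dots,d_{k-1}$ the diagonals at $u$, so that $t_j$ is bounded by the $u$-incident edges $d_{j-1},d_j$ and by an edge $e_j$ opposite $u$; let $s_j$ be the neighbour of $t_j$ across $e_j$. By Lemma~\ref{lem:defl-vis-thr-uniq}, through each $e_j$ the vertex $u$ sees a unique polygon edge $g_j$, and the whole argument will come down to showing that the polygon edges visible to $u$ are exactly $g_1,\dots,g_k$ together with the incident edges $d_0,d_k$, and that these are precisely the edges reachable by the visibility paths in question.

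For the forward direction, suppose $u$ sees a polygon edge $g$. If $g$ is incident to $u$, then $g\in\{d_0,d_k\}$ is an edge of $t_1$ or $t_k$, and the two-node sequence consisting of that triangle node followed by $g^D$ is a visibility path of the required form. Otherwise I choose a segment from $u$ to a generic interior point of $g$ avoiding all vertices (possible since a subarc of $g$ is visible and $P$ is in general position); this segment leaves $u$ through the triangle $t_j$ whose angular wedge at $u$ contains its initial direction and so crosses the opposite edge $e_j$. Hence $g$ is the unique polygon edge seen through $e_j$, i.e.\ $g=g_j$. If $e_j$ is itself a polygon edge then $g=e_j$ and $(t_j,g^D)$ is the desired path; otherwise Corollary~\ref{cor:ind-path-vis}, applied with $a=t_j$ and $b=s_j$, yields a visibility path starting at a triangle incident to $u$ and ending on $g^D$.

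For the converse I take a visibility path $(x_1,\dots,x_n)$ with $x_1$ incident to $u$ and analyse its first step, noting that the three edges of $x_1^T$ are the two $u$-incident edges and the edge opposite $u$. If $x_2$ is a terminal node, then $g$ is a polygon edge of $x_1^T$, hence an edge of a triangle containing $u$ and so visible to $u$. If $x_2$ is the triangle across the edge opposite $u$, Corollary~\ref{cor:ind-path-vis} identifies $x_n=g^D$ with the polygon edge $u$ sees through that edge, which is visible. The remaining case is that $x_2$ crosses a $u$-incident diagonal into an adjacent fan triangle; here I will show, using the counter-clockwise edge ordering preserved by the directed dual (as exploited in Lemma~\ref{lem:vis-path-geom}) together with the right-reflex definition, that whenever the path enters a fan triangle $t_j$ across a $u$-incident diagonal it leaves either across the next $u$-incident diagonal or across $e_j$, according to whether the reflex endpoint of the entered diagonal is its non-$u$ endpoint or $u$ itself. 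Thus the path sweeps monotonically through the fan until it reaches a boundary edge $d_0$ or $d_k$ at $u$ (visible, incident to $u$) or turns out across some $e_j$; in the latter case the suffix beginning $(t_j,s_j)$ is itself a visibility path, and Corollary~\ref{cor:ind-path-vis} shows it ends on $g_j^D$, so $g=g_j$ is visible to $u$.

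I expect the fan-traversal step to be the main obstacle: one must pin down, from the right-reflex definition and the orientation conventions, exactly which neighbour the branching rule selects when a fan triangle is entered through a $u$-incident diagonal, confirm that the resulting motion is strictly monotone (the successor is never the predecessor), and check that after turning out across some $e_j$ the path never re-enters the fan, which follows because by Corollary~\ref{cor:ind-path-vis} the induced sequence then proceeds away from $u$ and so avoids all triangles containing $u$. A secondary technical point is the general-position choice of the segment in the forward direction, which is what makes the exit triangle $t_j$ and its opposite edge $e_j$ well defined.
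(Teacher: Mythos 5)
Your proposal is correct in substance, and its forward direction is essentially the paper's: dispose of the case where $g$ lies on a triangle incident to $u$ with a two-node path, and otherwise feed the edge through which $u$ sees $g$ into Corollary~\ref{cor:ind-path-vis}. The backward direction is where you genuinely diverge. The paper never tracks how the path moves among the triangles incident to $u$: it splits on whether every triangle node of the path is $u$-incident (in which case $g$ is an edge of a $u$-incident triangle, hence visible), and otherwise takes the \emph{first} edge $e$ the path crosses from a $u$-incident triangle node $a$ to a non-$u$-incident node $b$; this choice forces $e^T$ to be the edge of $a^T$ opposite $u$, so Corollary~\ref{cor:ind-path-vis} together with the fact that two consecutive nodes determine the rest of a visibility path pins down the terminal node as the edge $u$ sees through $e^T$. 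You instead analyse the first step and prove a fan-sweeping claim: entering a fan triangle across a $u$-incident diagonal, the path exits across the next diagonal or across the edge opposite $u$, according to whether the reflex endpoint of the entered diagonal is its non-$u$ endpoint or $u$. This claim is true, and what you flag as the main obstacle is in fact a one-line consequence of Lemma~\ref{lem:vis-path-geom}: when a path enters $b$ across $e$, the next edge traversed is the edge of $b^T$ opposite the reflex endpoint of $e^T$, which for a $u$-incident diagonal is exactly your dichotomy; so the gap you anticipate closes easily. (Your secondary worry about the path re-entering the fan is moot for the same reason the paper's argument works: once the path exits across some $e_j$, the suffix beginning $(t_j,s_j)$ is a visibility path whose terminal node Corollary~\ref{cor:ind-path-vis} already identifies, regardless of which triangles it later visits.) What each approach buys: the paper's is shorter and orientation-free, requiring no case analysis on sweep direction; yours yields a sharper structural picture---visibility paths sweep monotonically through a triangle fan, mirroring the outer-path correspondence used elsewhere in the paper---at the cost of orientation bookkeeping that must be carried out for both sweep directions to make the argument complete.
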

\begin{proof}
  Assume $u$ sees $g$.  If $g$ is an edge of a triangle, $a$, incident
  to $u$ then $(a^D, g^D)$ is the required visibility path.  Otherwise
  $u$ sees $g$ through some edge, $e$, and the existence of the
  required visibility path follows from
  Corollary~\ref{cor:ind-path-vis}.
  
  Assume, now, that the visibility path exists.  If its triangle nodes
  all correspond to triangles incident to $u$ then $g$ is incident to
  one of these triangles and so visible to $u$.  Otherwise, let $e$ be
  the first edge the path traverses from a node, $a$, corresponding to
  a triangle incident to $u$ to a node, $b$, corresponding to a
  triangle not incident to $u$.
  
  Then, by Corollary~\ref{cor:ind-path-vis}, the induced sequence of
  $u$ through $e^T$ corresponds to a visibility path starting with
  $(a, b)$ and this visibility path ends on a node corresponding to
  the edge $u$ sees through $e$.  Since two consecutive nodes of a
  visibility path determine all subsequent nodes, these visibility
  paths end on the same node, $g^D$, and so $u$ sees $g$.
\end{proof}


\begin{figure}[htb]
  \centering \includegraphics{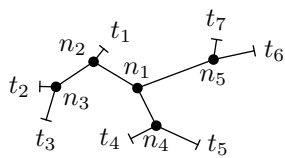}
  \caption{A plane tree with the following maximal outer paths:
    $(t_7$, $n_5$, $n_1$, $n_2$, $t_1)$, $(t_1$, $n_2$, $n_3$, $t_2)$,
    $(t_2$, $n_3$, $t_3)$, $(t_3$, $n_3$, $n_2$, $n_1$, $n_4$, $t_4)$,
    $(t_4$, $n_4$, $t_5)$, $(t_5$, $n_4$, $n_1$, $n_5$, $t_6)$,
    $(t_6$, $n_5$, $t_7)$.}
  \label{fig:outer-path}
\end{figure}

An \emph{outer path} of a plane tree, $D$, is the sequence of nodes
visited in a counter-clockwise walk along its outer face in which no
node is visited twice.  An outer path is \emph{maximal} if it is not a
proper substring of any other outer path (e.g. see
Figure~\ref{fig:outer-path}).  Note that an outer path, $(x_1$, $x_2$,
\ldots, $x_n)$, of the directed dual of a polygon triangulation, $T$,
corresponds to a triangle fan in $T$ where the triangles have
clockwise order $x_1^T, x_2^T, \ldots, x_n^T$ about their shared
vertex.

\begin{theorem}
  A pair of vertices, $u$ and $v$, of a deflated polygon $P$ are
  visible if and only if, in the directed dual, $D$, of the
  triangulation, $T$, of $P$, their corresponding maximal outer paths
  share a node.
\end{theorem}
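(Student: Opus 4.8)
The plan is to reduce vertex--vertex visibility to triangle incidence in $T$ and then read off that incidence from the combinatorics of maximal outer paths. Recall from the preliminaries that the visibility graph of a deflated polygon is exactly its unique triangulation $T$. Hence $u$ and $v$ are visible if and only if $uv$ is an edge of $T$, and since every edge of $T$ lies on a triangle while any two vertices of a triangle are joined by one of its edges, this holds if and only if some triangle of $T$ is incident to both $u$ and $v$. Thus it suffices to show that the maximal outer paths of $u$ and $v$ share a node precisely when $T$ has a triangle incident to both $u$ and $v$.

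First I would make the correspondence between maximal outer paths and vertices of $P$ precise. Building on the note preceding the theorem, an outer path corresponds to a fan of triangles about a common vertex, so a maximal one is the complete fan about a single vertex $w$. Concretely, I claim the maximal outer path of $w$ lists, in order, the terminal node of one polygon edge at $w$, the triangle nodes of all triangles incident to $w$, and the terminal node of the other polygon edge at $w$. The point to verify is that the counter-clockwise outer-face walk is forced to break exactly at each terminal (degree-one) node: upon reaching such a leaf the walk must return along its only edge, immediately repeating the preceding triangle node, so no outer path can pass through a terminal node in its interior. Since each vertex of $P$ is incident to exactly two polygon edges and to a nonempty fan of triangles bounded by them, this makes the maximal outer paths and the vertices of $P$ correspond bijectively, and I take ``the maximal outer path of $u$'' to be the one about $u$.

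Granting this, both directions are immediate. If the maximal outer paths of $u$ and $v$ share a triangle node $\Delta^D$, then $\Delta^T$ is incident to both, so $u$ and $v$ are visible. If they instead share only a terminal node $f^D$, then the polygon edge $f^T$ is incident to both $u$ and $v$, forcing $f^T = uv$; the unique triangle carrying this boundary edge is then incident to both, so in fact they also share a triangle node and are visible. Conversely, if $u$ and $v$ are visible then $uv$ lies on a triangle $\Delta^T$ incident to both, and $\Delta^D$ appears in both fans. The only real work is the first step---verifying that the maximal outer paths are exactly the per-vertex fans, terminal node to terminal node---after which the identity of visibility with triangulation adjacency collapses the statement to a one-line incidence check.
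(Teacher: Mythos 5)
Your proof is correct and takes essentially the same route as the paper's: identify each maximal outer path with the full triangle fan about a vertex, and use that a deflated polygon's visibility graph is its unique triangulation, so that visibility is equivalent to incidence with a common triangle. The extra details you supply---the argument that terminal nodes can only occur at the ends of outer paths, and the case where the shared node is a terminal node---are precisely what the paper's two-line proof leaves implicit (relying on the note preceding the theorem).
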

\begin{proof}
  The maximal outer paths of $u$ and $v$ share a node in $D$ if and
  only if they are incident to a common triangle in $T$ and, since $P$
  is deflated, this is the case if and only if $u$ and $v$ are
  visible.
\end{proof}


\subsection{Directed Dual Equivalence}

In this section, we show that if two deflated polygons have the same
directed dual, then one can be monotonically deformed into the other.
First, we fully characterize the directed duals of deflated polygons.

\begin{figure}[htb]
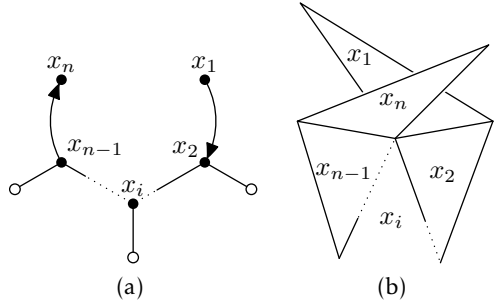

  \centering \subfloat[]{\label{fig:ill-path:tree}
    \includegraphics{figs/ill-path.mps}} \quad
  \subfloat[]{\label{fig:ill-path:olap}
    \includegraphics{figs/ill-path-poly.mps}}
  \caption{If \protect\subref{fig:ill-path:tree} the tree with outer
    path $(x_1$, $x_2$, \ldots, $x_n)$ were a subtree of the directed
    dual of a polygon triangulation, $T$,
    then \protect\subref{fig:ill-path:olap} the triangles
    corresponding to nodes $x_1$, $x_2$, $x_{n-1}$ and $x_n$ in $T$
    would overlap, contradicting the simplicity of the
    polygon.}  \label{fig:ill-path}
\end{figure}

\begin{figure}[htb]
  \centering
  \includegraphics{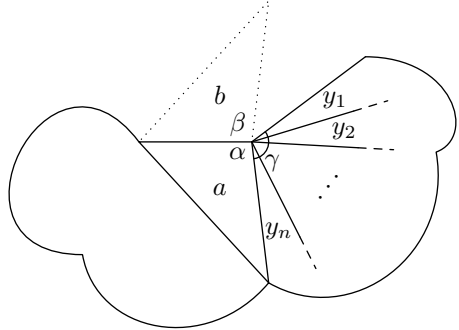}
  \caption{The inductive polygon in the proof of
    Theorem~\ref{thm:ddd-iff-nip} or a polygon from the inductive
    deformation in the proof of Theorem~\ref{thm:defl-dd-eq}.}
  \label{fig:no-ill-then-space}
\end{figure}

\begin{theorem}
  \label{thm:ddd-iff-nip}
  A partially directed plane tree, $D$, in which every non-terminal
  node has degree three and where an edge is directed if and only if
  it joins two non-terminal nodes of degree three is the directed dual
  of a deflated polygon if and only if it does not contain an outer
  path, $(x_1$, $x_2$, \ldots, $x_n)$, with $n\ge 4$, such that the
  edges from $x_1$ and $x_{n-1}$ are both forward directed (i.e.
  $x_1\rightarrow x_2$ and $x_{n-1}\rightarrow x_n$).
\end{theorem}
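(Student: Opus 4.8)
The plan is to prove the two directions separately. Necessity (a realizable directed dual avoids the forbidden outer path) is a direct geometric argument, while sufficiency (any tree avoiding it is realizable) is an induction that builds the polygon one ear at a time, as suggested by Figure~\ref{fig:no-ill-then-space}.

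For necessity, suppose $D$ is the directed dual of a deflated polygon and, for contradiction, that it contains a forbidden outer path $(x_1,\ldots,x_n)$ with $n\ge 4$ and $x_1\rightarrow x_2$, $x_{n-1}\rightarrow x_n$. As noted before the theorem, this outer path corresponds to a triangle fan about a common vertex $w$, with the triangles $x_1^T,\ldots,x_n^T$ appearing in clockwise order about $w$. The first step is to translate edge direction into angular information: for two consecutive triangles of the fan, the pair is right-reflex, i.e.\ its dual edge is forward directed, exactly when $w$ is the reflex endpoint of their shared diagonal, which happens exactly when the combined angle the two triangles subtend at $w$ exceeds $\pi$. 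This follows from the definition of right-reflex together with the clockwise orientation of the fan, since crossing the shared diagonal in the direction of the dual edge keeps $w$ on the right. Applying this at both ends of the path, the pairs $(x_1^T,x_2^T)$ and $(x_{n-1}^T,x_n^T)$ each subtend more than $\pi$ at $w$; because $n\ge 4$ these two angular ranges lie at opposite, disjoint ends of the fan, so the interior angle at $w$ would exceed $2\pi$. This is impossible in a simple polygon, where the triangles would overlap as in Figure~\ref{fig:ill-path}, giving the contradiction.

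For sufficiency, I would induct on the number of triangle nodes of $D$. The base case is a single triangle node with three terminal neighbours, realized by any triangle. For the inductive step, choose an ear, that is, a triangle node $a$ adjacent to two terminal nodes and one non-terminal node $b$, and let $D'$ be obtained by deleting $a$ together with its two terminals and attaching a fresh terminal node to $b$ in their place. One checks that $D'$ again satisfies the degree and direction hypotheses, and that it contains no forbidden outer path, since every outer path of $D'$ extends to one of $D$ and deleting the ear cannot create the forbidden pattern. By induction $D'$ is the directed dual of a deflated polygon $P'$. It then remains to glue the ear back: letting $xy$ be the polygon edge of $P'$ corresponding to the new terminal node on $b$, I would introduce a new apex $p$ just outside $xy$, forming the triangle $a^T=(x,y,p)$, and choose the position of $p$ so that the quadrilateral $a^T\cup b^T$ has its reflex vertex on the side dictated by the direction of the edge between $a$ and $b$ in $D$, making $(a^T,b^T)$ or $(b^T,a^T)$ right-reflex as required. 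Placing $p$ in a sufficiently thin sliver guarantees that $p$ sees only $x$ and $y$, so that no previously absent visibility is created and, by Corollary~\ref{cor:subpoly-defl} together with Lemma~\ref{lem:cutear-defl}, the resulting polygon is deflated with directed dual exactly $D$.

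The heart of the argument, and the step I expect to be the main obstacle, is showing that this reattachment can actually be carried out: that there is a nonempty region in which to place $p$ satisfying simultaneously the prescribed reflex orientation, simplicity of the resulting polygon, and preservation of all existing visibilities. The reflex constraint forces $a^T$ to open the fan at one endpoint of $xy$ past $\pi$, so room exists only if the triangles already incident to that endpoint in $P'$ leave enough angular slack. The no-forbidden-path hypothesis is precisely what supplies this slack: a lack of room would, by the angle--direction correspondence established in the necessity direction, manifest as a forbidden outer path with both end edges forward directed in $D$. Making this correspondence precise in the inductive setting, and confirming that the thin-sliver placement neither overlaps $P'$ nor alters any existing visibility, is where the real work lies.
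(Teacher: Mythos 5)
Your overall architecture coincides with the paper's: the same fan/angle argument for necessity (the paper phrases it as the shared vertex being reflex in both end quadrilaterals $x_1^T\cup x_2^T$ and $x_{n-1}^T\cup x_n^T$, contradicting their disjointness), and the same ear-removal induction for sufficiency, realizing the smaller tree $D'$ by induction and re-attaching an apex in a thin sliver on the polygon edge corresponding to the new terminal node. Your necessity direction is complete and correct. The problem is in sufficiency, where you explicitly defer ``the heart of the argument,'' and where the mechanism you sketch for it would not work as stated. What must be shown is that the triangles $y_1^T,\ldots,y_n^T$ of the inductively built polygon that are incident to the attachment vertex $u$ have angle sum at $u$ strictly less than $\pi$, so that the new apex can be placed making $u$ reflex without the new triangle overlapping the old polygon. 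You claim that a lack of room would, ``by the angle--direction correspondence,'' manifest as a forbidden outer path. But that correspondence is \emph{pairwise}: a forward edge between consecutive fan triangles corresponds to their two angles at the fan vertex summing to more than $\pi$. The total angle at $u$ can exceed $\pi$ while every consecutive pair of angles sums to less than $\pi$ (e.g.\ three angles of $70^\circ$), so ``no room'' does not directly produce a forward-directed fan edge, and hence no forbidden path; your contrapositive does not follow from the correspondence alone.

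The paper closes this hole in two steps that are missing from your proposal. First, it observes that $(y_i,\ldots,y_n,a,b)$ is itself an outer path of $D$ whose final edge $a\rightarrow b$ is forward directed, so the no-illegal-path hypothesis forces \emph{every} edge along the fan to be backward directed. Second, it uses the fact that all-backward-directedness implies the angle sum at $u$ is less than $\pi$; this implication genuinely requires the angles at the far endpoints of the fan diagonals, not just the angles at $u$. Concretely, if $\alpha_i$, $\beta_i$, $\gamma_i$ denote the angles of $y_i^T$ at $u$, at $v_{i-1}$ and at $v_i$ (where $v_i$ is the far endpoint of the diagonal shared by $y_i^T$ and $y_{i+1}^T$), then a backward edge says the quadrilateral $y_i^T\cup y_{i+1}^T$ is reflex at $v_i$, i.e.\ $\gamma_i+\beta_{i+1}>\pi$, which rewrites as $\gamma_i>\alpha_{i+1}+\gamma_{i+1}$; telescoping over $i$ gives $\sum_{i}\alpha_i<\alpha_1+\gamma_1<\pi$. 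Some argument of this kind is indispensable: without it the induction cannot be completed, so your proposal has a genuine gap at exactly the step you flagged as ``where the real work lies.''
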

Henceforth, we call such a path an \emph{illegal path}.
\begin{proof}
  Suppose $D$ contains an illegal path, $(x_1$, $x_2$, \ldots, $x_n)$.
  If $D$ is the directed dual of a polygon triangulation, $T$, then
  $x_1^T$, $x_2^T$, $x_{n-1}^T$ and $x_n^T$ share a common vertex
  reflex in both quadrilaterals $x_1^T\cup x_2^T$ and $x_{n-1}^T\cup
  x_n^T$ (see Figure~\ref{fig:ill-path}).  This contradicts the
  disjointness of these quadrilaterals.
  
  Suppose, now, that $D$ has no illegal paths.  We prove the converse
  with a construction of a polygon triangulation having $D$ as its
  directed dual.  Let $b$ be a terminal node in the subtree of $D$
  induced by its non-terminal nodes.  Then $b$ has two terminal
  neighbours and one non-terminal neighbour, $a$.  Let $D'$ be the
  tree resulting from replacing $a$ and its terminal neighbours with a
  single terminal node, $x$, connected to $b$ with an undirected edge.
  By induction on the number of non-terminal nodes, there exists a
  deflated polygon triangulation, $T$, having $D'$ as its directed
  dual.
  
  Assume, without loss of generality, that the edge joining $a$ and
  $b$ is directed $a\rightarrow b$.  Let $u$ be the endpoint of
  $x^{T}$ pointing in a clockwise direction in the boundary of $T$ and
  let $(y_1$, $y_2$, \ldots, $y_n)$ be the outer path of $D$
  corresponding to the triangles other than $b^{T}$ in $T$ incident to
  $u$ (see Figure~\ref{fig:no-ill-then-space}).  Note that $(y_i$,
  $y_{i+1}$, \ldots, $y_n$, $a$, $b)$ is an outer path of $D$ and so,
  by hypothesis, for all $i\in\{1, 2, \ldots, n-1\}$, the edge joining
  $y_i$ and $y_{i+1}$ is directed $y_i\leftarrow y_{i+1}$.
  
  Then, to show that a triangle may be appended to $T$ to form the
  required triangulation, it suffices to show that the sum of the
  angles at $u$ of the triangles $y_1^{T}$, $y_2^{T}$, \ldots,
  $y_n^{T}$ is less than $\pi$, which, in turn, follows from the
  backward directedness of the edges of $(y_1$, $y_2$, \ldots, $y_n)$.
\end{proof}


\begin{theorem}
  \label{thm:defl-dd-eq}
  If the deflated polygons $P$ and $P'$ have the same directed dual,
  $D$, then $P$ can be monotonically deformed into $P'$.
\end{theorem}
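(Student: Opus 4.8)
The plan is to deform $P$ into $P'$ without ever leaving the class of deflated polygons with directed dual $D$. Recall that, by Theorem~\ref{thm:dd-det-ve} together with the vertex--vertex visibility theorem that follows it, the visibility graph of a deflated polygon is completely determined by its directed dual; in particular $P$ and $P'$ realize exactly the same set of visible pairs, namely the edges of the triangulation encoded by $D$. Hence, if I can exhibit a continuous path $t\mapsto P^t$ from $P^0=P$ to $P^1=P'$ in which every $P^t$ is a deflated polygon with directed dual $D$, then the set of visible pairs is \emph{constant} along the path, so the deformation is not merely monotonic but visibility-preserving, and every $P^t$ is simple because deflated polygons are simple. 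Thus it suffices to prove that the space $\mathcal{C}_D$ of deflated polygons having directed dual $D$ is path-connected.

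I would prove this by induction on the number of triangle nodes of $D$, mirroring the construction in the proof of Theorem~\ref{thm:ddd-iff-nip}. For the base case $D$ has a single triangle node, so $P$ and $P'$ are identically oriented triangles, and any two such triangles are joined by a path of non-degenerate triangles, along which the visibility graph is the constant complete graph on three vertices. For the inductive step, let $b$ be a leaf of the subtree of $D$ induced by its non-terminal nodes, so that $b^T$ is an ear; let $w$ be its helix and $e$ the diagonal it cuts off. Removing this ear from $P$ and the corresponding ear from $P'$ yields deflated polygons $Q$ and $Q'$ (Lemma~\ref{lem:cutear-defl}) which share the directed dual $D'$ obtained from $D$ by replacing $b$ and its two terminal neighbours with a single terminal node. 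By the induction hypothesis there is a path $t\mapsto Q^t$ in $\mathcal{C}_{D'}$ from $Q$ to $Q'$.

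It remains to add the ear back continuously. For a polygon $R\in\mathcal{C}_{D'}$, let $A(R)$ be the set of points $p$ such that gluing the triangle with apex $p$ onto the edge $e^R$ produces a deflated polygon with directed dual $D$. This region is cut out by the open half-plane on the exterior side of $e^R$, the angular constraint forcing the correct reflex endpoint of $e^R$ (so that the pair $(a,b)$ has the right-reflex orientation prescribed by $D$), and the wedge condition that the triangles incident to the clockwise endpoint $u$ of $e^R$ leave angular room less than $\pi$ (see Figure~\ref{fig:no-ill-then-space}); being an intersection of half-planes and wedges, $A(R)$ is convex, and it is non-empty precisely because $D$ contains no illegal path, exactly as established in the proof of Theorem~\ref{thm:ddd-iff-nip}. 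I would then choose $w^t\in A(Q^t)$ continuously with $w^0=w$ and $w^1=w'$: first slide the helix from $w$ to a canonically chosen point of the convex region $A(Q^0)$, carry that canonical point along as the region $A(Q^t)$ deforms continuously with $Q^t$, and finally slide it to $w'$ inside $A(Q^1)$. Setting $P^t=Q^t\cup\triangle(w^t,e^{Q^t})$ gives the required path in $\mathcal{C}_D$.

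The main obstacle is this last step: one must verify that the admissible region $A(R)$ is non-empty and connected for every intermediate polygon and that it varies continuously with $R$, so that a continuous choice of helix $w^t$ joining $w$ to $w'$ genuinely exists. Non-emptiness and the crucial angular room come directly from the absence of illegal paths via Theorem~\ref{thm:ddd-iff-nip}, and connectivity follows from the convexity of the defining constraints; the only delicate point is organizing the continuous selection as the convex region moves, which is routine once these structural facts are in hand.
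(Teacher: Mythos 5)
Your overall skeleton is the same as the paper's: induct on the number of triangles by cutting an ear, deform the smaller polygon $Q$ to $Q'$ by induction, and then re-attach the ear continuously (and your observation that any path through deflated polygons with directed dual $D$ is automatically visibility-preserving, hence monotonic, is correct and is implicitly what the paper uses). The gap is in the re-attachment step, and it is exactly where the real difficulty of the theorem lives. Your claim that $A(R)$ is ``cut out by'' a half-plane and angular wedges at $u$, and is therefore convex, is false: those local angle conditions are necessary but far from sufficient. Membership in $A(R)$ also requires (1) \emph{global simplicity} --- the ear triangle must be disjoint from all of $R$ except $e^R$, and the rest of $R$ may curl around into the exterior region near $e^R$ --- and (2) \emph{global non-visibility} --- no vertex $x$ of $R$ other than the endpoints of $e^R$ may see the apex $p$. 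Point (2) is a genuine constraint because deflatedness of $R$ does not prevent a vertex $x$ from seeing points of the open edge $e^R$ through a chain of diagonals (deflatedness only forbids crossings of vertex-vertex visibility edges); each such $x$ casts a ``beam'' of forbidden apex positions through $e^R$, and these beams cut non-convex regions out of your wedge. So $A(R)$ is not an intersection of half-planes and wedges, is not convex, and your canonical-point selection and the slide from $w$ into it are unjustified.

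What is true, and what the paper proves instead, is that everything works \emph{near} $u$: for any apex at angle $\beta$ with $\pi-\alpha<\beta<\min(\pi,\,2\pi-\alpha-\gamma)$ and sufficiently small distance $r$ from $u$, the resulting polygon is deflated with directed dual $D$ (here $\gamma<\pi$ is where Theorem~\ref{thm:ddd-iff-nip} enters, guaranteeing the angle interval is non-empty). The paper therefore first brings the helix arbitrarily close to $u$, then runs the induction with the explicit continuous choice $\beta^t=\pi-(\alpha^t+\gamma^t)/2$ at a fixed radius $r$ chosen uniformly small by compactness of $[0,1]$, which sidesteps any selection-from-a-moving-region argument. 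Note that even the first step --- sliding the helix radially toward $u$ --- is not free: one must check no vertex of $Q$ gains sight of the helix en route, i.e.\ that no vertex of $Q$ sees any point of the open ear edge incident to the reflex endpoint $u$. That fact follows from the visibility-path machinery (Lemma~\ref{lem:defl-vis-thr}, Corollary~\ref{cor:ind-path-vis}, Theorem~\ref{thm:dd-det-ve}): any visibility path crossing $e^D$ into the ear node terminates at the ear edge \emph{opposite} the reflex endpoint, never at the edge incident to it. This argument (or an equivalent one) is what would also rescue your approach --- it shows $A(R)$ augmented with $u$ is star-shaped along rays from $u$, giving the connectivity you wanted --- but it has to be supplied; convexity cannot be the reason, and as written your proof does not close this hole.
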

\begin{proof}
  Let $b$ be an ear of the triangulation, $T$, of $P$ and let $b'$ be
  the triangle corresponding to $b^D$ in the triangulation, $T'$, of
  $P'$.  By induction on the number of triangles in $T$, there is a
  monotonic deformation $t\mapsto Q^t$ from $Q = P\setminus b$ to $Q'
  = P'\setminus b'$.  Note that replacing $b^D$ and its terminal nodes
  in $D$ with a single terminal node gives the directed dual, $D'$, of
  $Q$.  Then, since $Q$ is deflated (Lemma~\ref{lem:cutear-defl}) and
  $t\mapsto Q^t$ is monotonic, for all $t\in [0,1]$, $Q^t$ is deflated
  and has directed dual $D'$.
  
  Let $v$ be the helix of $b$, let $a$ be the triangle sharing an
  edge, $e$, with $b$ and let $u$ be the reflex endpoint of $e$.  We
  need to show that there is a continuous map $t\mapsto v^t$ that,
  combined with $t\mapsto Q$, gives a monotonic deformation of a
  polygon with directed dual $D$.  For $t\in [0,1]$, let $\alpha^t$ be
  the angle of $a^t$ at $u^t$ in $Q^t$ and let $\gamma^t$ be the sum
  of the angles at $u^t$ of the triangles, $y_1^t$, $y_2^t$, \ldots,
  $y_n^t$, other than $a^t$ of the triangulation of $Q^t$ incident to
  $u^t$ (see Figure~\ref{fig:no-ill-then-space}).
  
  Then, since $v$ may be brought arbitrarily close to $u$ in a
  monotonic deformation of $P$, it suffices to show that there is a
  continuous map $t\mapsto \beta^t$ specifying an angle for $b^t$ at
  $u^t$ such that, for all $t\in [0,1]$, $0 < \beta^t < \pi$,
  $\alpha^t + \beta^t > \pi$ and $\alpha^t + \beta^t + \gamma^t <
  2\pi$.  The latter two conditions are equivalent to
  \[
    \pi - \alpha^t < \beta^t < (\pi - \alpha^t) + (\pi - \gamma^t)\;.
  \]
  It follows from Theorem~\ref{thm:ddd-iff-nip} that the outer path
  $(y_1^{D'}$, $y_2^{D'}$, \ldots, $y_n^{D'})$ is left-directed and so
  that $\gamma^t < \pi$.  Then $\beta^t = \pi - (\alpha^t
  + \gamma^t)/2$ satisfies all required conditions.
  
  Now, let $t\mapsto R^t$ be the monotonic deformation from a polygon
  with directed dual $D$ combining $t\mapsto Q^t$ and the map
  $t\mapsto v^t$ defined by a fixed distance between $u^t$ and $v^t$
  of $r\in\R_{>0}$ and an angle for $b^t$ at $u^t$ of
  $\beta^t$.
  
  Prepending $t\mapsto R^t$ with a deformation of $P$ in which $v$ is
  brought to the distance $r$ from $u$ and then rotated about $u$ to
  an angle of $\beta^0$; then appending a deformation comprising
  similar motions ending at $P'$; and, finally, scaling in time gives
  a continuous map, $t\mapsto P^t$, with $P^0 = P$ and $P^1 = P'$.
  Since, for all $t\in[0,1]$, $Q^t$ is simple, a small enough $r$ can
  be chosen such that $t\mapsto P^t$ is simplicity-preserving.  Then,
  by the properties of $t\mapsto \beta^t$, $t\mapsto P^t$ is the
  required monotonic deformation.
\end{proof}


\iffullversion

\section{Vertex-Edge Visibilities in Monotonic Deformations}

In the following Lemmas we use analytic arguments similar to those
used by {\'A}brego et al. \cite{Abrego11} to investigate the nature of
collinearities in deformations and derive a needed vertex-edge
visibility property of monotonic deformations.

\begin{lemma}
  Let $t\mapsto P^t$ be a deformation of a polygon, $P$, let $u$, $v$
  and $w$ be vertices of $P$ and let $c\in[0,1]$.  Suppose that, for
  every $\delta>0$, the pierced $\delta$-neighbourhood, $N_\delta =
  (c-\delta, c+\delta)\cap [0,1]\setminus \{c\}$, of $c$ has a point,
  $s\in N_\delta$, such that $u^s$, $v^s$ and $w^s$ are collinear in
  $P^s$.  Then $u^c$, $v^c$ and $w^c$ are collinear in $P^c$.
\end{lemma}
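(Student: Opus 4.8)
The plan is to reduce the geometric condition ``collinear'' to the vanishing of a single continuous scalar function of $t$, and then run a standard limit argument. Collinearity of the three points $u^t$, $v^t$, $w^t$ is a property of their positions in the plane alone (the surrounding polygon $P^t$ plays no role), so I would encode it by the signed-area function
\[
  f(t) = \det\!\begin{pmatrix} v^t - u^t \\ w^t - u^t \end{pmatrix},
\]
where $v^t - u^t$ and $w^t - u^t$ are regarded as row vectors in $\R^2$. This determinant is (twice) the signed area of the triangle $u^t v^t w^t$, so it vanishes precisely when $u^t$, $v^t$ and $w^t$ are collinear in $P^t$. Because a deformation assigns to each of $u$, $v$, $w$ a continuous map $t\mapsto u^t$, etc., the coordinates of these three images are continuous on $[0,1]$; and since $f$ is a fixed polynomial in those coordinates, $f$ is continuous on $[0,1]$.

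Next I would convert the hypothesis into a sequence of collinearity instants accumulating at $c$. Applying the assumption with $\delta = 1/n$ for each positive integer $n$ yields a point $s_n\in N_{1/n} = (c-1/n,\,c+1/n)\cap[0,1]\setminus\{c\}$ at which $u^{s_n}$, $v^{s_n}$ and $w^{s_n}$ are collinear, hence $f(s_n)=0$. By construction $s_n\neq c$ and $|s_n - c| < 1/n$, so the sequence $(s_n)$ lies in $[0,1]$ and converges to $c$. Continuity of $f$ at $c$ then gives
\[
  f(c) = \lim_{n\to\infty} f(s_n) = 0,
\]
and therefore $u^c$, $v^c$ and $w^c$ are collinear in $P^c$, as claimed.

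There is no deep obstacle here: the result is exactly the elementary fact that a continuous function vanishing on a set that accumulates at $c$ must vanish at $c$. The only genuine step requiring attention is the first paragraph's observation that collinearity is captured by a continuous function at all, which reduces to noting that the vertex maps are continuous by the very definition of a deformation; and the only mild bookkeeping is extracting the explicit sequence $s_n\to c$ from the ``for every $\delta$'' phrasing of the hypothesis. Simplicity-preservation of the deformation and the polygonal structure of $P^t$ are not needed, since collinearity depends only on the three moving points.
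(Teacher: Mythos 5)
Your proof is correct, and it shares the same basic skeleton as the paper's: encode collinearity of the three moving points as a level set of a continuous scalar function of $t$, then let continuity transfer the accumulating collinearities into the point $c$. The instantiation differs, though. The paper argues by contradiction using the angle $\alpha^t$ determined by $u^t$, $v^t$ and $w^t$: if $\alpha^c \neq k\pi$ for every $k\in\Z$, then the positive quantity $\min_{k\in\Z}|\alpha^c-k\pi|$ together with continuity of $t\mapsto\alpha^t$ would produce a pierced neighbourhood of $c$ free of collinear configurations, contradicting the hypothesis. You instead work directly with the signed area $f(t)$, extract a sequence of zeros $s_n\to c$, and conclude $f(c)=0$ by continuity. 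Your choice of function is marginally more robust: $f$ is a polynomial in the vertex coordinates, hence continuous with no caveats, and it correctly reports collinearity even in degenerate configurations where two of the three points coincide, whereas the paper's angle is well-defined and continuous only because a deformation preserves simplicity and therefore keeps distinct vertices distinct --- a fact the paper uses tacitly and your argument does not need at all (as you note explicitly). What the paper's formulation buys is continuity of its broader development: the angle/radial viewpoint is reused immediately afterwards in the radial-order argument of Lemma~\ref{lem:rad-pres-neighb}, while your determinant version stands alone as a more elementary statement about three continuously moving points.
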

\begin{proof}
  Assume otherwise and, for all $t\in[0,1]$, let $\alpha^t$ be the
  angle between $u^t$, $v^t$ and $w^t$ in $P^t$.  Then $t\mapsto
  \alpha^t$ is continuous and $\alpha^c\neq k\pi$, for any $k\in\Z$.
  Then, by hypothesis, there is no $\delta>0$ such that, for all $t\in
  N_\delta$, $|\alpha^c - \alpha^t| < \min_{k\in\Z}|\alpha^c-k\pi| >
  0$, contradicting the continuity of $t\mapsto \alpha^t$.
\end{proof}

\begin{cor}
  Let $t\mapsto P^t$ be a deformation of a polygon, $P$, let $u$, $v$
  and $w$ be vertices of $P$ and let $c\in[0,1]$.  If $u^c$, $v^c$ and
  $w^c$ are not collinear in $P^c$, then there exists a $\delta>0$
  such that, for every $t\in N_\delta$, $u^t$, $v^t$ and $w^t$ are not
  collinear in $P^t$.
\end{cor}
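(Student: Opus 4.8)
The plan is to recognise this corollary as precisely the contrapositive of the lemma immediately preceding it, so that the entire argument reduces to correctly negating the lemma's quantified hypothesis and conclusion. First I would restate the lemma in the form I intend to use: if for every $\delta > 0$ the pierced neighbourhood $N_\delta$ of $c$ contains a point $s$ at which $u^s$, $v^s$ and $w^s$ are collinear in $P^s$, then $u^c$, $v^c$ and $w^c$ are collinear in $P^c$.

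Next I would negate both ends of that implication. The negation of the lemma's conclusion is exactly the corollary's hypothesis, namely that $u^c$, $v^c$ and $w^c$ are \emph{not} collinear in $P^c$. The negation of the lemma's hypothesis reads: there exists a $\delta > 0$ such that $N_\delta$ contains \emph{no} point $s$ with $u^s$, $v^s$ and $w^s$ collinear; equivalently, for every $t \in N_\delta$ the points $u^t$, $v^t$ and $w^t$ fail to be collinear in $P^t$. This is verbatim the conclusion claimed by the corollary. Hence, applying the lemma in contrapositive form to the corollary's hypothesis yields the result immediately, with no further geometric or analytic input required.

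I expect no genuine obstacle here beyond careful quantifier bookkeeping, since the work of the argument was already done in the lemma via the continuity of the angle function $t \mapsto \alpha^t$. The one point worth verifying is that $N_\delta = (c-\delta, c+\delta) \cap [0,1] \setminus \{c\}$ is defined with the intersection against $[0,1]$ already built in, so the boundary cases $c = 0$ and $c = 1$ need no separate treatment.
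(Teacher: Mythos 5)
Your proposal is correct and matches the paper exactly: the paper states this corollary with no separate proof precisely because it is the contrapositive of the preceding lemma, which is the reading you give. Your quantifier negation is accurate, and your observation that $N_\delta$ already incorporates the intersection with $[0,1]$ (so the endpoints $c=0$ and $c=1$ need no special handling) is a correct reading of the paper's definition.
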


\begin{cor}
  \label{cor:safe-neighb}
  Let $t\mapsto P^t$ be a deformation of a polygon, $P$, and let
  $c\in[0,1]$.  There exists a $\delta>0$ such that, for every $t\in
  N_\delta$, no three vertices are collinear in $P^t$ unless their
  correspondents are collinear in $P^c$.
\end{cor}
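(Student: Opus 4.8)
The plan is to prove Corollary~\ref{cor:safe-neighb} as a finite intersection argument built on the immediately preceding Corollary. Since the polygon $P$ has only finitely many vertices, it has only finitely many unordered triples of vertices. For each such triple $\{u,v,w\}$, I consider two cases at time $c$: either $u^c$, $v^c$, $w^c$ are collinear in $P^c$, or they are not. The triples that are collinear at $c$ impose no constraint we need to avoid (the statement explicitly permits collinearity whose correspondents are collinear at $c$), so I can ignore them. For each of the finitely many triples that are \emph{not} collinear at $c$, the preceding corollary supplies a $\delta_{\{u,v,w\}}>0$ such that no $t\in N_{\delta_{\{u,v,w\}}}$ makes that triple collinear in $P^t$.

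First I would enumerate the finite set $S$ of vertex triples that are non-collinear at time $c$. For each triple in $S$, invoke the previous corollary to obtain its associated $\delta$. Then I would set $\delta = \min$ over this finite collection of the individual $\delta_{\{u,v,w\}}$ values; because $S$ is finite and each term is strictly positive, this minimum is itself strictly positive. The claim is that this single $\delta$ works: for any $t\in N_\delta$ and any triple of vertices collinear in $P^t$, that triple cannot lie in $S$ (since $t\in N_\delta\subseteq N_{\delta_{\{u,v,w\}}}$ for every triple in $S$, and those are non-collinear throughout their neighbourhoods), hence the triple must have been collinear at $c$, which is exactly the permitted case.

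The main obstacle, such as it is, is purely bookkeeping rather than mathematical: I must be careful that the quantifiers align with the statement. The corollary asserts the existence of one $\delta$ that simultaneously handles \emph{all} triples, so I cannot simply apply the previous corollary once; the essential move is that finiteness of the vertex set converts a family of pointwise-in-$t$ guarantees into a single uniform neighbourhood. I would also note the degenerate edge cases — triples involving repeated vertices or the case where no triple is non-collinear at $c$ (in which $S=\emptyset$ and any $\delta>0$, say $\delta=1$, suffices) — to make the minimum well-defined. No deformation-specific geometry beyond the previous corollary is required; the whole content is the reduction from per-triple neighbourhoods to a global one via a finite minimum.
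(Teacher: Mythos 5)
Your proof is correct and is precisely the argument the paper intends: the corollary is stated without proof because it follows from the preceding corollary by exactly this routine step of taking the minimum of the finitely many per-triple neighbourhoods. Nothing is missing; the handling of the empty case $S=\emptyset$ is a fine touch of care but not a point of divergence.
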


We call the corresponding $\delta$-neighbourhood, $N_\delta$, the
\emph{safe neighbourhood} of $c$.

\begin{lemma}
  \label{lem:rad-pres-neighb}
  Let $t\mapsto P^t$ be a deformation of a polygon, $P$, let $u$ be a
  vertex of $P$, let $c\in[0,1]$, let $N_\delta$ be a safe
  neighbourhood of $c$ and let $W^c$ be a subset of the vertices of
  $P^c$ having distinct projections onto the unit circle about $u^c$.
  Then, for all $t\in N_\delta$, the corresponding vertex subset,
  $W^t$, of $P^t$ has the same radial order about $u^t$ as does $W^c$
  about $u^c$.
\end{lemma}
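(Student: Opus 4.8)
The plan is to prove Lemma~\ref{lem:rad-pres-neighb} by using the safe neighbourhood machinery already established in Corollary~\ref{cor:safe-neighb} to rule out any change in the radial order of $W^t$ about $u^t$. The radial order of a finite point set about a center is determined entirely by the pairwise angular comparisons: for two vertices $w_1, w_2 \in W^c$, which of the two rays $u^c w_1^c$ and $u^c w_2^c$ comes first in a counter-clockwise sweep. Two distinct projections onto the unit circle about $u^c$ can only swap their cyclic order as $t$ varies if, at some intermediate time, the two corresponding rays coincide---equivalently, if $u$, $w_1$ and $w_2$ become collinear with $w_1, w_2$ on the same side of $u$. So the whole argument reduces to showing that no such coincidence of rays happens for $t \in N_\delta$.

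First I would fix the safe neighbourhood $N_\delta$ of $c$ guaranteed by Corollary~\ref{cor:safe-neighb}, so that throughout $N_\delta$ no three vertices are collinear unless their correspondents are already collinear in $P^c$. Since the vertices of $W^c$ have \emph{distinct} projections onto the unit circle about $u^c$, no two of them are collinear with $u^c$ along a common ray in $P^c$; by the hypothesis of distinct projections, the triples $(u^c, w_1^c, w_2^c)$ are non-collinear for every pair $w_1 \neq w_2$ in $W^c$. Then, by the defining property of the safe neighbourhood, for every $t \in N_\delta$ and every such pair, $u^t$, $w_1^t$ and $w_2^t$ remain non-collinear in $P^t$, so the two rays from $u^t$ never coincide on $N_\delta$.

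Next I would promote this non-coincidence of rays into preservation of the full cyclic order. I would argue by continuity: for each ordered pair the angular position of $w^t$ as seen from $u^t$ varies continuously in $t$, and the pairwise angular separations never hit $0$ or $\pi$ on $N_\delta$ (they are never $0$ because the rays never coincide, and one can arrange that no pair is antipodal by the same collinearity exclusion, since an antipodal pair is also a collinear triple through $u$). Because each pairwise comparison is a continuous, never-vanishing signed quantity on the connected set $N_\delta$, it cannot change sign; hence every pairwise counter-clockwise comparison is constant across $N_\delta$, and since these comparisons jointly determine the radial order, $W^t$ has the same radial order about $u^t$ as $W^c$ does about $u^c$.

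The main obstacle I anticipate is handling the \emph{antipodal} (collinear-through-$u$ but on opposite sides) case cleanly, and more generally making precise the claim that pairwise angular comparisons determine the cyclic radial order without circular reasoning. The safe-neighbourhood corollary directly forbids collinear triples, which covers the antipodal case as well, so the delicacy is really bookkeeping: I must be careful that ``distinct projections onto the unit circle'' is exactly the condition guaranteeing the relevant triples $(u, w_1, w_2)$ are non-collinear, and that connectedness of $N_\delta = (c-\delta, c+\delta)\cap[0,1]\setminus\{c\}$ is used correctly---note $N_\delta$ is a union of at most two intervals, so I would either apply the sign-constancy argument on each side of $c$ separately and then invoke continuity at $c$ itself (where $W^c$ fixes the order), or observe that the order at $c$ anchors both pieces. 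That continuity-at-the-center step is the one place where I expect to have to be slightly more careful than a one-line appeal to connectedness.
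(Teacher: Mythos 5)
Your overall strategy is the same as the paper's: by simplicity the vertices never coincide with $u^t$, so their projections onto the unit circle about $u^t$ move continuously; a change in radial order forces two projections to coincide at some intermediate time; coincidence of projections is a collinear triple through $u$; and the safe neighbourhood of Corollary~\ref{cor:safe-neighb} is invoked to forbid that collinearity. The paper's proof is exactly this argument, stated in two sentences.

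However, the step where you dispose of the antipodal case is wrong, and it is the one step you yourself identify as the crux. Distinct projections onto the unit circle about $u^c$ do \emph{not} imply that the triples $(u^c, w_1^c, w_2^c)$ are non-collinear: two vertices lying on a common line through $u^c$ but on \emph{opposite} sides of $u^c$ have distinct (antipodal) projections, yet the triple is collinear in $P^c$. For such a pair, Corollary~\ref{cor:safe-neighb} gives you nothing, because it forbids collinearity in $P^t$ only for triples whose correspondents are \emph{not} collinear in $P^c$; so your later assertion that ``the safe-neighbourhood corollary directly forbids collinear triples, which covers the antipodal case as well'' misstates the corollary in precisely the case where you need it. Consequently your claim that the rays of $w_1^t$ and $w_2^t$ never coincide on $N_\delta$ is unsupported when $w_1^c$ and $w_2^c$ are antipodal about $u^c$. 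In fairness, the paper's own proof silently makes the same assumption (its ``contradiction'' with the safe neighbourhood is only a contradiction when the triple is non-collinear at time $c$), so your argument is at parity with the published one; but since you explicitly raised the antipodal case and then closed it with a false implication, it counts as a gap rather than an omission. A genuine patch needs an additional idea, for instance: choose a vertex $x$ of $P$ not on the line through $u^c$, $w_1^c$, $w_2^c$ (one exists, since the vertices of a simple polygon are not all collinear); the triples $(u,w_1,x)$ and $(u,x,w_2)$ are non-collinear at $c$, so the safe neighbourhood keeps them non-collinear throughout $N_\delta$, hence the counter-clockwise angle from $w_1$'s projection to $x$'s and from $x$'s to $w_2$'s each remain trapped in $(0,\pi)$ or in $(\pi,2\pi)$; a short case analysis on their sum then shows it can never be $0 \bmod 2\pi$, i.e., the antipodal pair's projections can never meet within $N_\delta$.
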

\begin{proof}
  Since deformations preserve simplicity, the vertices of $P^t$ never
  coincide and so their projections on the unit circle about $u^t$
  also move continuously.  Then a change in radial order between two
  vertices, say $v$ and $w$, implies that, for some intermediate
  $c'\in(c,t)$, $u^{c'}$, $v^{c'}$ are $w^{c'}$ collinear in $P^{c'}$,
  contradicting $N_\delta$ being a safe neighbourhood.
\end{proof}


\begin{figure}[htb]
  \centering
  \includegraphics{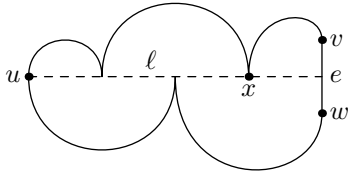}
  \caption{A polygon with visible vertex-edge pair, $(u,e)$, joined by
    a unique closed line segment, $\ell$, contained in the closed
    polygon.}
  \label{fig:u-see-edge}
\end{figure}

\begin{figure}[htb]
  \centering
  \includegraphics{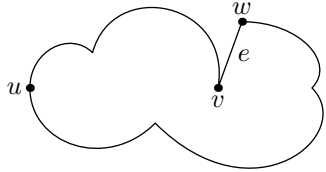}
  \caption{A polygon in general position in which a vertex, $u$, and
    edge, $e$, are not visible and where $u$ sees a single endpoint,
    $v$, of $e$.  In such a polygon, $e$ necessarily neither faces nor
    is collinear to $u$.}
  \label{fig:edge-mon-see-ep}
\end{figure}

\else 

\section{Deflatability of Polygons}

In this section, we show how deflatable polygons may be related
combinatorially to their deflation targets and use this result to
present a polygon that cannot be deflated.  We also show that
vertex-vertex visibilities do not determine deflatability.  These
results depend on the following Lemma.

\fi

\begin{lemma}
  \label{lem:mon-defm-ve-mon}
  Let $t\mapsto P^t$ be a monotonic deformation of a polygon, $P$, in
  general position.  Then a vertex and an edge are visible in $P^1$
  only if they are visible in $P$.
\end{lemma}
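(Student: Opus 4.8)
The plan is to prove the contrapositive: assuming the vertex $u$ and the edge $e=ab$ are \emph{not} visible in $P=P^0$, I will show they are not visible in $P^1$ either. Throughout I will use the engine supplied by the hypothesis, namely that vertex--vertex visibility is non-increasing: if $w^t$ and $w'^t$ are visible in $P^t$ then $w^s$ and $w'^s$ are visible in $P^s$ for every $s<t$. I also keep in mind the distinction made in Figure~\ref{fig:edge-mon-see-ep}: seeing an endpoint of $e$ does \emph{not} by itself count as seeing $e$, so ``$(u,e)$ visible'' means $u$ sees an interior point of $e$, and at such a point $e$ necessarily faces $u$. The overall strategy is to suppose, for contradiction, that $(u,e)$ is non-visible at $t=0$ but visible at $t=1$, to isolate the \emph{last} moment $c\in(0,1)$ at which edge visibility is ``born,'' and to show that this birth forces a vertex pair to become visible across $c$, violating monotonicity.

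To set up the transition, I would first argue that the set of times at which $(u,e)$ is visible is essentially closed (a subsequential limit of witnessing closed segments $u^{t_n}p^{t_n}$ gives a closed segment inside the limiting closed polygon), so that $c=\sup\{t : (u^t,e^t)\text{ not visible}\}$ is well defined, lies strictly between $0$ and $1$, and satisfies: $(u^t,e^t)$ is visible for all $t\in(c,1]$, while there are times $t_n\uparrow c$ at which it is not. I would then pass to a safe neighbourhood $N_\delta$ of $c$ (Corollary~\ref{cor:safe-neighb}), inside which, by Lemma~\ref{lem:rad-pres-neighb}, the radial order about $u$ of the polygon's vertices is rigid except at the collinearities already present at time $c$. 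At $c$ the visible portion of $e$ is being created, so by a stability argument the witnessing segment cannot lie in the open polygon except at its endpoints (otherwise visibility would persist for $t$ slightly below $c$); hence, as in Figure~\ref{fig:u-see-edge}, it is a \emph{grazing} segment $u^cp^c$ touching $\partial P^c$ at a reflex vertex $w$ that delimits the nascent visible interval, giving a collinearity of $u^c$, $w$, and the point $p^c\in e^c$.

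The argument then splits on where the interval is born. If it is born at an endpoint $v\in\{a,b\}$ of $e$, then for $t$ just past $c$ the vertex $u$ sees points of $e$ arbitrarily close to $v$, so in the limit $u^c$ sees $v^c$; moreover the edge must come to face $u$ across $c$, and the facing of $e=ab$ toggles only at a collinearity of $u$, $a$, $b$, at which the far endpoint is seen collinearly through the edge. Using the facing obstruction of Figure~\ref{fig:edge-mon-see-ep}, one of $(u,a)$ or $(u,b)$ is therefore newly visible at $c$, contradicting monotonicity. The remaining, and genuinely delicate, case is the \emph{interior} birth, where a reflex notch lying between $u$ and the middle of $e$ retracts so that an interior point $p^c$ becomes visible with both endpoints of $e$ staying occluded. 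Here I would use Lemma~\ref{lem:rad-pres-neighb} to track the reflex vertex $w$ (and its occluding edge) whose radial crossing past the relevant endpoint of $e$ governs the uncovering, and argue that, because simplicity forbids the notch from closing by contact, the uncovering is accompanied by a fresh collinearity exposing a vertex of the occluding chain to $u$ (or to the endpoint it was shadowing) across $c$. That this opening produces a \emph{new} vertex--vertex visibility, rather than merely a visibility already present along the grazing segment and hence preserved, is the main obstacle, and it is precisely where the radial-order control and the collinearity-limit lemma are indispensable.
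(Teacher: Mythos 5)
Your framework---contrapositive, a transition time $c$, the safe neighbourhood of Corollary~\ref{cor:safe-neighb}, radial-order rigidity via Lemma~\ref{lem:rad-pres-neighb}, a grazing witness segment at $c$, and a split into endpoint birth versus interior birth---is the same as the paper's, and your endpoint case is essentially the paper's second case: propagate the visibility of $u$ and the endpoint back in time by monotonicity, note that general position plus invisibility of $(u,e)$ at time $0$ forbids $e$ from facing $u$ there, extract an intermediate collinearity of $u$ with the edge, and conclude that $u$ sees both endpoints, which, pushed back to time $0$, violates general position. The problem is the interior-birth case, which you explicitly leave open: you need the uncovering at time $c$ to produce a vertex--vertex visibility that did not exist just before $c$, and you give no argument that the exposed vertex was not already visible to $u$. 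That is not a technicality to be deferred; it is the heart of the lemma, and as written your proof does not go through.

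The paper closes this hole with two steps absent from your proposal. First, it proves that the witness segment $\ell$ at time $c$ is \emph{unique}: two witnesses are either collinear, in which case $u^c$ sees both endpoints of $e^c$ and this propagates back to time $0$ against general position, or they bound a triangle of witnesses, in which case Lemma~\ref{lem:rad-pres-neighb} makes $(u,e)$ visible throughout the safe neighbourhood, contradicting the choice of $c$. Uniqueness then forces $\ell$ to pass through at least one vertex of \emph{each} of the two boundary chains from $u$ to $e$ (Figure~\ref{fig:u-see-edge}); your picture of a single reflex vertex $w$ is already too weak here. Second---and this is the key idea you are missing---the paper takes $x$ to be the grazing vertex on $\ell$ \emph{furthest from} $u$ and shows that $(u,x)$ itself is the newly created pair: if $u^s$ saw $x^s$ at a time $s<c$ in the safe neighbourhood, the extension of the segment $u^sx^s$ beyond $x^s$ toward $e^s$ would have to cross some open edge $f^s$ (otherwise $(u,e)$ would be visible at $s$), and Lemma~\ref{lem:rad-pres-neighb} then forces an endpoint of $f^c$ to lie on $\ell$ strictly between $x^c$ and $e^c$, contradicting the maximality of $x$. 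Hence $(u,x)$ is invisible at $s$ but visible at $c$ along $\ell$, contradicting monotonicity directly. Without this furthest-grazing-vertex argument, or a substitute for it, the ``main obstacle'' you name remains exactly that.
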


\iffullversion

\begin{proof}
  Suppose a vertex, $u^1$, sees an edge, $e^1$, with endpoints $v^1$
  and $w^1$ in $P^1$ such that $u$ and $e$ are not visible in $P$.
  Let $c$ be the supremum of the set \[\{x\in[0,1] : \text{for all
  $t\in[0,x)$, $u^t$ and $e^t$ are not visible in $P^t$}\}\] and let
  $N_\delta$ be a safe neighbourhood of $c$
  (Corollary~\ref{cor:safe-neighb}).  Note that $e^c$ is either facing
  $u^c$ or is collinear with $u^c$ in $P^c$, since, otherwise, for all
  $t\in N_\delta$, $u^t$ and $e^t$ would not be visible in $P^t$,
  contradicting the choice of $c$.
  
  We begin by establishing the claim that there exists a unique closed
  line segment, $\ell$, contained in the closed polygon $P^c$ joining
  $u^c$ and the closed edge $e^c$.  Suppose, first, that no such
  segment exists.  Then every open line segment joining $u^c$ and
  $e^c$ intersects an open edge of $P^c$.  Then, by
  Lemma~\ref{lem:rad-pres-neighb}, for all $t\in N_\delta$, $u^t$ and
  $e^t$ are not visible in $P^t$, contradicting the choice of $c$.
  
  Suppose, now, that two such segments exist.  Then either these
  segments are collinear, in which case so are $u^c$ and the endpoints
  of $e^c$, contradicting the monotonicity of the deformation (since
  $u$ may see at most a single endpoint of $e$ in $P$ without seeing
  $e$ itself) or else they form a triangle.  Then every closed segment
  joining $u^c$ and $e^c$ contained in this closed triangle is also
  such a segment and so, by Lemma~\ref{lem:rad-pres-neighb}, for all
  $t\in N_\delta$, $u^t$ and $e^t$ are visible in $P^t$, contradicting
  the choice of $c$.
  
  From the claim, it follows that $e^c$ is facing $u^c$ in $P^c$ and
  we are left with two cases.

  \noindent\textbf{Case I:} $\ell$ joins $u^c$ and a point on
  the open edge $e^c$.  Since $\ell$ is unique, there is at least one
  vertex from each of the two chains of $P^c$ from $u^c$ to $e^c$
  incident to $\ell$, as in Figure~\ref{fig:u-see-edge}.  Let $x^c$ be
  the furthest of these vertices from $u^c$ and let $s\in N_\delta$,
  with $s<c$.  Suppose $u^s$ and $x^s$ are visible in $P^s$.  Then the
  closed line segment joining $u^s$ and $x^s$ is contained in the
  closed polygon $P^s$ but, since $s<c$, the extension of this segment
  joining $x^s$ and $e^s$ must intersect an open edge $f^s$ of $P^s$.
  It, then, follows from Lemma~\ref{lem:rad-pres-neighb} that an
  endpoint of $f^c$ is incident to $\ell$ in $P^c$ between $x^c$ and
  $e^c$, contradicting the choice of $x^c$.  Then $u^s$ and $x^s$ are
  not visible in $P^s$ but $u^c$ and $x^c$ are visible in $P^c$,
  contradicting the monotonicity of the deformation.

  \noindent\textbf{Case II:} $\ell$ joins $u^c$ and an
  endpoint, say $v^c$ without loss of generality, of $e^c$.  Then
  $u^c$ sees $v^c$ in $P^c$ and so, by monotonicity, for all
  $t\in[0,c)$, $u^t$ sees $v^t$ in $P^t$.  Since $P$ is in general
  position and $u$ and $e$ are not visible in $P$, $e$ must neither be
  facing $u$ nor be collinear with $u$ in $P$, as in
  Figure~\ref{fig:edge-mon-see-ep}.  Then, since $e^c$ is facing $u^c$
  in $P^c$, there must be some intermediate $c'\in(0,c)$ such that
  $u^{c'}$ and $e^{c'}$ are collinear in $P^{c'}$.  But since $u^{c'}$
  sees $v^{c'}$ in $P^{c'}$, it must also see the other endpoint,
  $w^{c'}$, of $e^{c'}$, contradicting the monotonicity of the
  deformation.
\end{proof}

\section{Deflatability of Polygons}

With this result, we now show how deflatable polygons may be related
combinatorially to their deflation targets and use this result to
present a polygon that cannot be deflated.  We also show that
vertex-vertex visibilities do not determine deflatability.

\else 

The proof, which is available in the full version of this
paper \cite{Bose12}, uses analytic arguments similar to those used by
{\'A}brego et al. \cite{Abrego11}.

\fi


A \emph{compatible directed dual} of a polygon, $P$, in general
position is the directed dual of a deflated polygon, $P'$, such that,
under an order- and chirality-preserving bijection between the
vertices of $P$ and $P'$, a vertex-edge or vertex-vertex pair are
visible in $P'$ only if their correspondents are visible in $P$.  By
\emph{chirality-}preserving bijection, we mean one under which a
counter-clockwise walk on the boundary of $P$ corresponds to a
counter-clockwise walk on the boundary of $P'$.

\begin{theorem}
  \label{thm:no-cdd-not-defl}
  A polygon, $P$, in general position with no compatible directed dual
  is not deflatable.
\end{theorem}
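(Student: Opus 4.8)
The plan is to prove the contrapositive: I will show that every deflatable polygon in general position has a compatible directed dual. So suppose $P$ is deflatable and let $t\mapsto P^t$ be a deflation, i.e. a monotonic deformation with $P^1$ deflated. The idea is that the deflation target $P^1$ supplies the compatible directed dual directly: set $P' = P^1$ and let $D$ be the directed dual of the triangulation of $P'$. Since $P'$ is deflated it is in general position, its visibility graph is its unique triangulation, and every non-terminal edge of $D$ is directed, so $D$ is genuinely the directed dual of a deflated polygon. It then remains to verify that $D$ is \emph{compatible} with $P$ in the sense of the definition preceding the theorem.

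The bijection between the vertices of $P$ and those of $P'$ is the one induced by the deformation, sending each vertex $v$ of $P$ to $v^1$. First I would check that this bijection is order- and chirality-preserving. Order-preservation is immediate: the deformation fixes the combinatorial boundary of the polygon, so consecutive vertices of $P$ map to consecutive vertices of $P'$. Chirality-preservation follows because the signed area of $P^t$ is a continuous function of $t$ that is never zero (each $P^t$ is a simple polygon and hence has nonzero area), so its sign is constant; thus a counter-clockwise boundary walk on $P$ corresponds to a counter-clockwise boundary walk on $P'$.

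The heart of the argument is to confirm the visibility condition: any vertex-edge or vertex-vertex pair visible in $P'$ has correspondents visible in $P$. For a vertex-vertex pair $(u^1, v^1)$ visible in $P' = P^1$, monotonicity of the deformation applied with $s = 0$ and $t = 1$ forbids this pair from being invisible at time $0$, so $u$ and $v$ are visible in $P$. For a vertex-edge pair visible in $P^1$, the corresponding pair is visible in $P$ directly by Lemma~\ref{lem:mon-defm-ve-mon}, whose hypotheses are exactly met here since $t\mapsto P^t$ is a monotonic deformation of the general-position polygon $P$. Combining these two observations shows that $D$ satisfies every requirement of a compatible directed dual of $P$, contradicting the assumption that $P$ has none; hence $P$ is not deflatable.

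I do not expect a serious obstacle in this proof, as the two substantive inputs — monotonicity for vertex-vertex pairs and Lemma~\ref{lem:mon-defm-ve-mon} for vertex-edge pairs — do all the work. The only point requiring genuine care is the verification that the deformation-induced bijection is order- and chirality-preserving, since the definition of compatibility is stated relative to such a bijection; the signed-area continuity argument above is what closes that gap.
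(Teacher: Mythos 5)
Your proposal is correct and takes essentially the same approach as the paper: the paper's proof is a one-sentence appeal to Lemma~\ref{lem:mon-defm-ve-mon}, observing that if $P$ monotonically deforms to a deflated polygon $P'$, then the directed dual of $P'$ is compatible with $P$. You simply spell out the details the paper leaves implicit---the vertex-vertex case via the definition of monotonicity, and the order- and chirality-preservation of the deformation-induced bijection.
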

\begin{proof}
  It follows from Lemma~\ref{lem:mon-defm-ve-mon} that if $P$ is
  monotonically deformable to a deflated polygon $P'$, then the
  directed dual of $P'$ is compatible with $P$.
\end{proof}

\begin{lemma}
  \label{lem:cdd-from-dd}
  Suppose a polygon, $P$, in general position has a compatible
  directed dual, $D$.  Let $P'$ be the deflated polygon with directed
  dual $D$ whose vertex-vertex and vertex-edge visibilities are a
  subset of those of $P$ under an order- and chirality-preserving
  bijection.  Then the unique triangulation, $T'$, of $P'$ is a
  triangulation, $T$, of $P$ under the bijection and $D$ can be
  constructed by directing the undirected non-terminal edges of the
  directed dual of $T$.
\end{lemma}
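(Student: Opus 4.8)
The plan is to prove the two assertions---that $T'$ transfers to a triangulation $T$ of $P$, and that $D$ is obtained from the directed dual of $T$ by directing its undirected non-terminal edges---in three stages, carrying the geometry of $P'$ over to $P$ only through the combinatorial data that the order- and chirality-preserving bijection $\phi$ preserves.

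First I would transfer the triangulation. The diagonals of $T'$ are vertex-vertex visibilities of $P'$, so by the hypothesis that the visibilities of $P'$ are a subset of those of $P$ under $\phi$, each correspondent is a visible vertex pair of $P$ and hence, as $P$ is in general position, a genuine diagonal of $P$. Whether two diagonals of a simple polygon cross is determined solely by whether their endpoints alternate in the cyclic boundary order, which $\phi$ preserves; since the diagonals of $T'$ are pairwise non-crossing, so are their correspondents in $P$. A set of $n-3$ pairwise non-crossing diagonals triangulates an $n$-gon, so the correspondents form a triangulation $T$ of $P$ with $T'\cong T$, and chirality-preservation makes this an isomorphism of plane triangulations. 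Consequently the dual trees of $T$ and $T'$ are isomorphic as plane trees, with matching triangle nodes, terminal nodes and edges.

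It then remains to compare directions. Every non-terminal edge of $D$ is directed, since $P'$ is deflated, whereas an edge of the directed dual of $T$ is directed exactly when the corresponding adjacent pair $(a,b)$ is reflex in $P$; the undirected ones are precisely the pairs whose quadrilateral union is convex in $P$, and these are the edges directed on passing to $D$. So it suffices to show that, for every pair reflex in $P$, the reflex endpoint of the shared diagonal $e$ is the same in $P$ and in $P'$, so that the two directions agree. The key observation is that for such a pair the union $a^T\cup b^T$ is a non-convex quadrilateral, hence a deflated sub-polygon, in both $P$ and $P'$; by Lemma~\ref{lem:defl-vis-thr-uniq} and Lemma~\ref{lem:defl-vis-thr}, the reflex endpoint $v$ of $e$ determines the edge of $b$ that the apex $u$ sees through $e$, namely the one opposite $v$, with the reflex vertex $v$ blocking $u$'s sightlines through $e$ toward the opposite edge $f^{*}$.

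Suppose, for contradiction, that the reflex endpoints differed, say $v$ in $P$ and the other endpoint $v^{*}$ in $P'$. Then in $P'$ the apex $u'$ sees, through $e'$, some polygon edge $g'$ lying on the $v^{*}$-side, by Corollary~\ref{cor:ind-path-vis} and Theorem~\ref{thm:dd-det-ve}; compatibility then forces $u$ to see $g$ in $P$. Since $e$ separates $u$ from $g$ combinatorially---the sub-polygon structure induced by $e$ is identical in $P$ and $P'$---any witnessing segment must cross $e$, so $u$ would see $g$ through $e$ on the $v^{*}$-side, which reflex-at-$v$ forbids. This contradiction shows the reflex endpoints coincide, completing the proof. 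The hard part is exactly this last step: making rigorous that the reflex vertex $v$ blocks $u$'s view through $e$ not merely inside the local quadrilateral but throughout $P$, so that the globally guaranteed visibility $u$--$g$ from compatibility is forced to traverse $e$ on the blocked side. I expect to obtain this from the uniqueness statement of Lemma~\ref{lem:defl-vis-thr-uniq} applied across $e$, together with the observation that $e$, being a diagonal, separates $u$ from $g$.
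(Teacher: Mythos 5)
Your proposal is correct and takes essentially the same route as the paper's proof: transfer $T'$ to a triangulation of $P$ using the vertex-vertex subset property and general position, reduce the claim about dual directions to the agreement of reflex endpoints of each shared diagonal, and derive a contradiction from the vertex-edge visibility that holds in $P'$ (the paper gets it from Lemma~\ref{lem:defl-vis-thr}, you from Corollary~\ref{cor:ind-path-vis}) but whose correspondent is blocked in $P$---a point the paper compresses into the bare assertion that ``the corresponding visibility is not present in $P$.'' The only adjustment needed is in your closing remark: Lemma~\ref{lem:defl-vis-thr-uniq} cannot be ``applied across $e$'' in $P$, since $P$ is not deflated; the blocking instead follows elementarily from your separation observation together with convexity---any segment from $u$ to a point of the closed edge of $b$ opposite the $P'$-reflex endpoint lies in the triangle spanned by $u$ and that edge, and that triangle is disjoint from the open diagonal $e$ precisely because $v$ is the reflex vertex in $P$, so no witnessing segment can cross $e$ as the separation requires.
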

\begin{proof}
  Note that $T'$ is the visibility graph of $P'$.  Then, since $P$ is
  in general position and has the same vertex count as $P'$, it
  follows from the vertex-vertex visibility subset property of $P'$
  that $T'$ triangulates $P$ under the bijection.
  
  It remains to show that, for every non-terminal edge of the directed
  dual of $T$, either the edge is undirected or it is directed as in
  $D$ or, equivalently, that for every pair of adjacent triangles, $a$
  and $b$, in $T$ corresponding to the triangles $a'$ and $b'$ in
  $T'$, if $(a,b)$ is right-reflex then so is $(a',b')$.  Suppose,
  instead, that $(b', a')$ is right-reflex.  Let $e'$ be the edge
  shared by $a'$ and $b'$, let $u'$ be the vertex of $a'$ opposite
  $e'$ and let $f'$ be the edge of $b'$ opposite the reflex endpoint
  of $e'$.  Then, by Lemma~\ref{lem:defl-vis-thr}, $u'$ sees an edge
  through $f'$ but the corresponding visibility is not present in $P$,
  contradicting the vertex-edge visibility subset property of $P'$.
\end{proof}


\begin{figure}[htb]
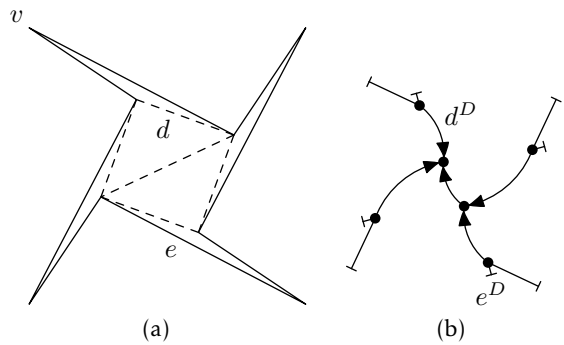

  \centering
  \subfloat[]{\label{fig:counter-poly:poly}
    \includegraphics{figs/counter-poly.mps}}
  \quad
  \subfloat[]{\label{fig:counter-poly:ddcand}
    \includegraphics{figs/counter-poly-ddual-cand.mps}}
  \caption{\protect\subref{fig:counter-poly:poly} A non-deflatable
    polygon, $P$, with its only triangulation, up to symmetry,
    indicated with dashed lines
    and \protect\subref{fig:counter-poly:ddcand} its only candidate
    for a compatible directed dual, $D$, up to symmetry. }
  \label{fig:counter-poly}
\end{figure}

\begin{theorem}
  There exists a polygon that cannot be deflated.
\end{theorem}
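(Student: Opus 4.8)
The plan is to exhibit a specific polygon and show it has no compatible directed dual, so that Theorem~\ref{thm:no-cdd-not-defl} immediately yields non-deflatability. I would take the polygon $P$ of Figure~\ref{fig:counter-poly:poly} together with its (essentially unique) triangulation, shown dashed, and reduce the problem to a finite combinatorial check on candidate directed duals. The strategic advantage of all the machinery already developed is that I no longer need any \textit{ad hoc} geometric reasoning: by Lemma~\ref{lem:cdd-from-dd}, any compatible directed dual $D$ must arise by orienting the undirected non-terminal edges of the directed dual of \emph{some} triangulation $T$ of $P$ that is itself the visibility graph of a deflated polygon realizing a subset of $P$'s visibilities. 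So the first step is to enumerate, up to the symmetry of $P$, the triangulations of $P$ and the resulting partially directed duals, arriving at the single candidate $D$ depicted in Figure~\ref{fig:counter-poly:ddcand}.

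The second step is the heart of the argument: I would show this candidate $D$ is not actually the directed dual of any deflated polygon, and hence cannot be compatible. This is precisely what Theorem~\ref{thm:ddd-iff-nip} is built for. Concretely, I would read off the outer paths of $D$ (as in Figure~\ref{fig:outer-path}) and locate an \emph{illegal path} $(x_1, x_2, \ldots, x_n)$ with $n \ge 4$ whose first and last non-terminal edges are both forward directed, i.e.\ $x_1 \rightarrow x_2$ and $x_{n-1} \rightarrow x_n$. The existence of such an illegal path is forced by the orientations that the right-reflex structure of $P$ imposes on $D$; once exhibited, Theorem~\ref{thm:ddd-iff-nip} tells us $D$ is \emph{not} the directed dual of any deflated polygon. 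Combined with the uniqueness established in step one, this shows $P$ has no compatible directed dual at all.

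Having ruled out the lone candidate, the proof closes by invoking Theorem~\ref{thm:no-cdd-not-defl}: a general-position polygon with no compatible directed dual is not deflatable, so $P$ is the desired non-deflatable polygon. I would also confirm that $P$ is in general position, since both Lemma~\ref{lem:cdd-from-dd} and Theorem~\ref{thm:no-cdd-not-defl} require this hypothesis; this is immediate from the construction of $P$ with no three vertices collinear and no degenerate visibilities.

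The main obstacle I anticipate is the completeness of the enumeration in step one. It is tempting to fix a triangulation and orient its dual, but a compatible directed dual need not come from \emph{the} natural triangulation of $P$ — it comes from the triangulation of the deflated target $P'$, which by Lemma~\ref{lem:cdd-from-dd} must triangulate $P$ under the order- and chirality-preserving bijection. Thus I must argue that $P$ admits only one triangulation up to symmetry (so that there is only one partially directed dual to consider), and that exactly one orientation of its undirected edges is even a candidate, before ruling that candidate out via the illegal path. Getting this case analysis airtight, rather than the final illegal-path observation, is where the care is required.
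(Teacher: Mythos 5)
Your first step matches the paper: use Lemma~\ref{lem:cdd-from-dd} and the symmetry of $P$ to reduce to the single candidate $D$ of Figure~\ref{fig:counter-poly:ddcand}, then conclude via Theorem~\ref{thm:no-cdd-not-defl}. But your second step --- the one you call the heart of the argument --- would fail. The candidate $D$ contains no illegal path: Theorem~\ref{thm:ddd-iff-nip} is precisely what forces the single undirected non-terminal edge to be directed the one way it is (the opposite orientation creates an illegal path and is thereby eliminated), so the surviving candidate is, by that very theorem, genuinely the directed dual of some deflated polygon $P'$. You cannot rule it out by exhibiting an illegal path, because none exists; if one did, there would be no candidate at all and no further argument would be needed. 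Your plan conflates ``not the directed dual of any deflated polygon'' with ``not compatible with $P$'' --- the former is strictly stronger, and it is only the latter that fails here.

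The missing idea is Theorem~\ref{thm:dd-det-ve}. The paper kills the last candidate by tracing a visibility path in $D$ starting at a triangle node incident to the vertex $v$ and ending at $e^D$, concluding that in \emph{every} deflated polygon $P'$ with directed dual $D$, the correspondents of $v$ and $e$ are visible. Since $v$ does not see $e$ in $P$, this violates the definition of compatibility (the visibilities of $P'$ must be a subset of those of $P$ under the order- and chirality-preserving bijection), giving the contradiction. So the correct structure is: Theorem~\ref{thm:ddd-iff-nip} narrows the orientations down to one legal candidate, and Theorem~\ref{thm:dd-det-ve} shows that this legal candidate forces a visibility that $P$ lacks. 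Without that second ingredient your proof cannot close.
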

\begin{proof}
  We show that the general position polygon, $P$, in
  Figure~\ref{fig:counter-poly:poly} has no compatible directed dual
  and so, by Lemma~\ref{thm:no-cdd-not-defl}, is not deflatable.
  Assume that the directed dual, $D$, of a deflated polygon, $P'$, is
  compatible with $P$.  Then, by Lemma~\ref{lem:cdd-from-dd}, $D$ can
  be constructed by directing the undirected non-terminal edges of the
  directed dual of some triangulation of $P$.  Up to symmetry, $P$ has
  a single triangulation, its directed dual has a single undirected
  non-terminal edge and there is a single way to direct this edge.
  Then we may assume, without loss of generality, that $D$ is the tree
  shown in Figure~\ref{fig:counter-poly:ddcand} and, by
  Theorem~\ref{thm:dd-det-ve}, the correspondents of the vertex $v$
  and edge $e$ in $P'$ are visible.  This contradicts the
  compatibility of $D$.
\end{proof}


\iffullversion

\begin{figure}[htb]
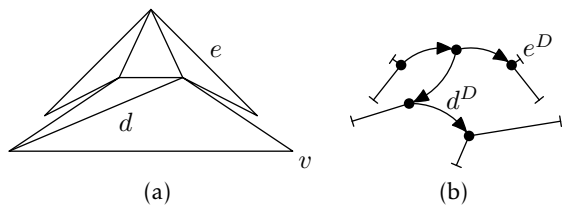

  \centering
  \subfloat[]{\label{fig:ndef-hept:tri}
    \includegraphics{figs/ndef-heptagon-tri.mps}}
  \quad
  \subfloat[]{\label{fig:ndef-hept:ddcand}
    \includegraphics{figs/ndef-heptagon-ddual-cand.mps}}
  \caption{\protect\subref{fig:ndef-hept:tri} The only triangulation,
    up to symmetry, of a non-deflatable heptagon
    and \protect\subref{fig:ndef-hept:ddcand} its only candidate for a
    compatible directed dual, $D$.}
  \label{fig:ndef-hept}
\end{figure}

\begin{figure}[htb]
  \centering
  \includegraphics{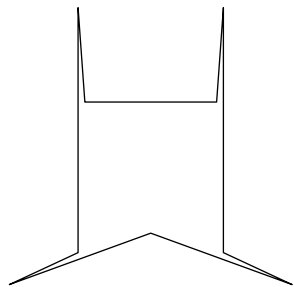}
  \caption{A non-deflatable nonagon.}
  \label{fig:ndef-non}
\end{figure}

This combinatorial technique of using the non-existence of compatible
directed duals can also be applied to other polygons.  For example,
the heptagon whose unique triangulation (up to symmetry) is shown in
Figure~\ref{fig:ndef-hept:tri} can also be shown to be non-deflatable
in this way, as can the nonagon of Figure~\ref{fig:ndef-non}.

Specifically, the directed dual of the triangulation of the heptagon,
$P$, leaves only a single non-terminal edge undirected.  By
Theorem~\ref{thm:ddd-iff-nip}, this edge may be directed in only one
way such that the resulting directed dual, $D$, shown in
Figure~\ref{fig:ndef-hept:ddcand}, has no illegal paths and is thus
the directed dual of a deflated polygon, $P'$.  Then the
correspondents of $v$ and $e$ are visible in $P'$ so that $D$ is
incompatible with $P$.  The non-deflatability proof for the nonagon
must consider two triangulations but has the same general form.

\else 

Note that, although the non-deflatability of $P$ can be shown using
\emph{ad hoc} arguments, the combinatorial technique used here can be
applied to other polygons.  See the full version of this
paper \cite{Bose12} for examples.

\fi

\iffullversion

A natural question is: What is the least $n$ for which there exists a
non-deflatable $n$-gon in general position?  It is trivial to show
that every quadrilateral is deflatable and not difficult to show the
same for all general position pentagons.  Then it remains only to
check for the existence of a non-deflatable hexagon.

\begin{theorem}\label{thm:hex-deflate}
  Every general position hexagon is montonically deflatable.
\end{theorem}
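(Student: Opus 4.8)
The plan is to reduce deflatability of a hexagon to the combinatorial question of whether it admits a compatible directed dual, and then to realize that dual by an explicit monotonic deformation. Theorem~\ref{thm:no-cdd-not-defl} already gives that a compatible directed dual is \emph{necessary} for deflatability; the first thing I would do is establish the converse in the form needed here, namely that if a general-position polygon $P$ has a compatible directed dual $D$, realized by a deflated polygon $P'$, then $P$ can be monotonically deformed to $P'$. By Lemma~\ref{lem:cdd-from-dd} the triangulation $T'$ of $P'$ is, under the order- and chirality-preserving bijection, a triangulation $T$ of $P$, so $P$ and $P'$ share a combinatorial triangulation and differ only in which non-terminal dual edges are directed. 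I would deform $P$ to $P'$ by the same ear-by-ear induction used in Theorem~\ref{thm:defl-dd-eq}, the one additional point being that each convex (undirected) quadrilateral union of $T$ must be driven through its flip into the reflex configuration prescribed by $D$; Lemma~\ref{lem:mon-defm-ve-mon} guarantees that doing so removes, and never introduces, vertex-edge visibilities, which together with the monotone control of vertex-vertex visibilities yields a deflation.

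With that reduction in hand, it remains to show that \emph{every} general-position hexagon has a compatible directed dual. I would argue this by a finite case analysis, organized by the number $r$ of reflex vertices, which for a simple hexagon satisfies $r\le 3$, and within each class by the finitely many combinatorial types of triangulation. For a fixed triangulation $T$ of the hexagon, the subtree of its dual induced by triangle nodes has only four nodes and hence three non-terminal edges; the undirected ones are exactly those whose quadrilateral union is convex. For each candidate I would direct the undirected edges and verify the two requirements: first, that the result is a legal directed dual (no illegal path), which by Theorem~\ref{thm:ddd-iff-nip} reduces to inspecting the short outer paths of a four-triangle-node tree; and second, that the resulting deflated polygon $P'$ is compatible, i.e. that every vertex-vertex and vertex-edge visibility forced in $P'$ is already present in $P$. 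The visibilities of $P'$ are read off combinatorially from $D$ using Theorem~\ref{thm:dd-det-ve} and its outer-path companion, so this check amounts to tracing the short visibility paths in each candidate dual.

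The main obstacle is the compatibility check, since this is exactly where the heptagon fails: there the unique triangulation forces a single legal directing whose visibility path realizes a vertex-edge pair $(v,e)$ absent from $P$. For the hexagon I expect to exploit the smallness of the dual tree to show no such failure can be forced. The key point I would isolate is that with at most four triangle nodes every visibility path is short, so the only extra visibilities that directing an undirected edge can create are between a vertex and an edge at most two triangles away; using Lemma~\ref{lem:defl-vis-thr} and Corollary~\ref{cor:ind-path-vis} I would trace, for each triangulation and each legal directing, precisely which vertex-edge visibility is created and test it against $P$. The extra freedom a hexagon enjoys---several triangulations related by diagonal flips at each convex quadrilateral union, rather than the heptagon's essentially unique one---is what I expect to make at least one choice compatible in every case. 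Completing the finitely many cases then yields a compatible directed dual for every general-position hexagon, and the reduction of the first paragraph finishes the proof.
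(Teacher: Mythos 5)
There is a genuine gap, and it sits at the foundation of your plan. Your entire reduction rests on the \emph{converse} of Theorem~\ref{thm:no-cdd-not-defl}: that a polygon possessing a compatible directed dual can actually be monotonically deformed to the deflated polygon realizing that dual. The paper does not prove this; it explicitly leaves it open (the concluding section says that \emph{if} this converse holds it would characterize deflatability, and even the stated conjecture there is a weaker vertex-edge statement). Your sketch of the converse does not close the hole. The ear-by-ear induction of Theorem~\ref{thm:defl-dd-eq} is only valid between two \emph{deflated} polygons with the same directed dual: it relies on Lemma~\ref{lem:cutear-defl} to keep every intermediate polygon deflated, and its angle bound $\gamma^t < \pi$ comes from Theorem~\ref{thm:ddd-iff-nip} applied to a deflated polygon's dual. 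Starting from a non-deflated $P$, none of that machinery applies, and "driving each convex quadrilateral through its flip" is precisely the step where new visibilities can appear. Moreover, you invoke Lemma~\ref{lem:mon-defm-ve-mon} backwards: that lemma states a \emph{necessary} condition satisfied by deformations already known to be monotonic; it cannot certify that a deformation you construct is monotonic. So the hard content of the theorem --- moving a crossed hexagon to a deflated one without ever gaining a visibility --- is assumed rather than proved, and your second part (every general-position hexagon admits a compatible directed dual), even if carried out, would prove nothing about deflatability on its own.

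For comparison, the paper's proof is entirely different and much more pedestrian: it inducts on the number $m$ of mutually visible non-adjacent vertex pairs, with base case the deflated hexagon ($m=3$). Each inductive step exhibits an explicit motion of a single vertex along a linear trajectory that destroys at least one visible pair while provably creating none, the polygon region shrinking monotonically ($P^{t'}\subseteq P^t$) so that only visibilities involving the moving vertex need checking. The cases are organized by the number of reflex vertices (zero and one reflex vertex get uniform arguments; two and three reflex vertices are handled by an enumeration of all general-position six-point order types and their simple, non-deflated Hamiltonian polygons, with a verified motion for each). If you want to salvage your approach, you would first have to prove the converse of Theorem~\ref{thm:no-cdd-not-defl} for hexagons, which is essentially as hard as the theorem itself.
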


\begin{proof}
  Let $P$ be any hexagon in general position.  The proof is by
  induction on the number, $m$, of pairs of mutually visible
  non-adjacent vertices of $P$.  The base case, $m=3$, happens when
  the hexagon is already deflated (it has a unique triangulation with
  four triangles and three non-polygon edges).

  The inductive step is made using an enormous case analysis grouped
  by the number of reflex vertices of the hexagon.  Note that the
  vertex set of a general position hexagon (segments joining visible
  vertex pairs are interior to the hexagon) may be put into point set
  general position (no three points are collinear) through a
  visibility-preserving perturbation.  Thus we may assume in this step
  that the vertex set of $P$ is in general position.

  The case where $P$ has no reflex vertices is handled by moving a
  vertex inwards until it becomes reflex.  A simple argument,
  presented in the next paragraph, suffices to handle all cases in
  which $P$ has exactly one reflex vertex.

  Refer to Figure~\ref{fig:hexdef-one-reflex} for what follows.  Let
  $a$, $b$, $c$, $d$, $e$, and $f$ be the vertices of $P$ in the order
  they occur on the boundary of $P$ and suppose, without loss of
  generality, that $a$ is the unique reflex vertex of $P$.  Suppose,
  again without loss of generality that there is a closed halfplane
  with $a$ on its boundary that contains $a$, $b$, $c$, and $d$, but
  not $f$. Then $abcd$ is a convex quadrilateral contained in $P$ and
  moving $c$ directly towards $a$ until it crosses $bd$ removes at
  least one visible pair, namely $bd$, from $P$.  This motion is
  monotonic because the only vertices not visible from $c$ (possibly
  $f$ and $e$) remain hidden ``behind'' $a$.  In particular, the
  orientiations of the triangles $fac$ and $eac$ do not change during
  this motion.

  \begin{figure}[htb]
    \centering
    \includegraphics{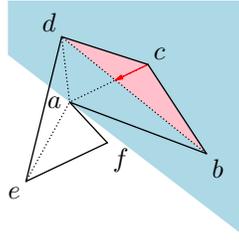}
    \caption{The inductive step of Theorem~\ref{thm:hex-deflate} when $P$ has
      one reflex vertex.}
    \label{fig:hexdef-one-reflex}
  \end{figure}

  The remaining cases have 2 or 3 reflex vertices and are each handled
  using a motion illustrated in Appendix~\ref{apx:hexagons}.  All
  these motions move a single vertex, say $a$, along a linear
  trajectory until it crosses a segment joining a visible pair of
  vertices in $P$.  All these motions have two properties that make it
  easy to check their correctness:

  \begin{enumerate}
  \item There is a convex polygon, $C$, whose vertices are a subset of
    those of $P$, that contains $a$, $b$, and $f$.  Additionally, the
    closure of $C$ contains $ab$ and $af$, while its interior
    intersects the boundary of $P$ at most in $ab$ and $af$.
    Throughout the motion, $a$ remains within $C$, except at the end,
    where it passes through the interior of an edge of $C$ that is
    interior to $P$ and stops an arbitrarily small distance outside of
    $C$. This guarantees that $P$ remains simple throughout the
    motion.  (See Figure~\ref{fig:hexdef-example}, where $C$ is the
    triangle $bcf$.)
  
  \item
    The motion of $a$ is such that the region bounded by the polygon
    strictly loses points, i.e., $P^{t'} \subseteq P^t$ for all $0\le
    t\le t'\le 1$.  This ensures that no pair of vertices
    $x,y\in\{b,c,d,e,f\}$ ever becomes visible during the motion.
    That is, the only possibility of the motion being non-monotonic
    comes from the possibility that $a$ may gain visibilities as it
    moves.
  \end{enumerate}

  The only remaining check, for each case, is to ensure that no new
  visible pair involving $a$ appears during the motion. This can be
  done case by case using only order type information about $P$.  We
  now illustrate one example, in Figure~\ref{fig:hexdef-example}.  In
  this example, $a$ is moved toward the interior of $P$ along the line
  through $ab$ until it crosses the segment $fc$.  This eliminates the
  visible pair $fc$.  This motion satisfies properties 1 and 2, above,
  so the polygon remains simple throughout the motion and no new
  visible pairs not involving $a$ are created.  To check that no new
  visible pair involving $a$ is created during the motion, observe
  that, initially, the only vertex not visible from $a$ is $e$.  In
  particular, this is because the sequence $efa$ forms a right turn.
  This remains true at the end of the motion because $efc$ forms a
  right turn and, at the end of the motion, $a$ is arbitrarily close
  to the segment $fc$.  Therefore, by convexity, $efa$ forms a right
  turn throughout the motion and at no point during the motion does
  the pair $ae$ become visible.

  \begin{figure}[htb]
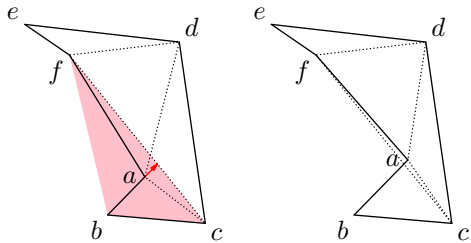

    \centering
    \subfloat{\includegraphics{figs/hexdef-example-before.mps}}
    \quad
    \subfloat{\includegraphics{figs/hexdef-example-after.mps}}
    \caption{One of the cases where $P$ has more than one reflex
      vertex in the proof of Theorem~\ref{thm:hex-deflate}.}
    \label{fig:hexdef-example}
  \end{figure}

  Similar statements can be verified for all the polygons in
  Appendix~\ref{apx:hexagons}.  We wish the reader good luck with
  their verification.
\end{proof}

\fi


\begin{figure}[htb]
  \centering
  \includegraphics{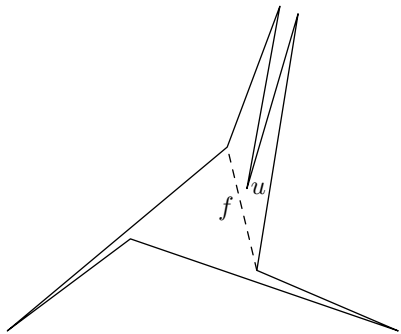}
  \caption{A deflatable polygon with the same vertex-vertex
    visibilities as the non-deflatable polygon shown in
    Figure~\ref{fig:counter-poly:poly}.}
  \label{fig:def-poly-same-vis}
\end{figure}

\begin{theorem}
  The vertex-vertex visibilities of a polygon do not determine its
  deflatability.
\end{theorem}
\begin{proof}
  The polygon in Figure~\ref{fig:def-poly-same-vis} has the same
  vertex-vertex visibilities as the non-deflatable polygon in
  Figure~\ref{fig:counter-poly:poly} and yet can be deflated by moving
  the vertex $u$ through the diagonal $f$.
\end{proof}


\section{Summary and Conclusion}

We presented the directed dual and showed that it captures the
visibility properties of deflated polygons.  We then showed that two
deflated polygons with the same directed dual can be monotonically
deformed into one another.  Next, we showed that directed duals can be
used to reason combinatorially, via directed dual compatibility, about
the deflatability of polygons.  Finally, we presented a polygon that
cannot be deflated\iffullversion , showed that a non-deflatable
general position polygon must have at least seven vertices\fi\ and
showed that the vertex-vertex visibilities of a polygon do not
determine its deflatability.

A full characterization of deflatable polygons still remains to be
found.  If the converse of Theorem~\ref{thm:no-cdd-not-defl} is true,
then the existence of a compatible directed dual gives such a
characterization.  We conjecture the following weaker statement.

\begin{conj}
  The vertex-edge visibilities of a polygon in general position
  determine its deflatability.
\end{conj}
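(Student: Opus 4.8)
The plan is to route through the compatible directed dual of Theorem~\ref{thm:no-cdd-not-defl}. That theorem already supplies the easy direction: if a general position polygon is deflatable, then it has a compatible directed dual, namely the directed dual of its deflation target $P'$. I would establish the conjecture in two stages. First, I would show that the existence of a compatible directed dual is a function of the vertex-edge visibilities alone. Second, I would show that for any two general position polygons $P$ and $Q$ sharing these visibilities, a deflation of $P$ can be transferred to a deflation of $Q$.

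For the first stage, the key new ingredient is what I will call the \emph{subset lemma}: if $P'$ is a deflated polygon whose vertex-edge visibilities form a subset of those of a general position polygon $P$ under an order- and chirality-preserving bijection, then the vertex-vertex visibilities of $P'$ also form such a subset. Granting this, the vertex-vertex clause in the definition of a compatible directed dual becomes redundant, so a directed dual is compatible with $P$ exactly when the vertex-edge visibilities of its deflated polygon embed into those of $P$, a condition depending only on the vertex-edge visibilities of $P$. To prove the subset lemma I would exploit the deflated structure of $P'$: if $u'$ sees $v'$ then they lie in a common triangle (as for deflated polygons a visible vertex pair is exactly one sharing a triangle), so the witnessing segment $u'v'$ terminates at $v'$ and hence $u'$ sees both polygon edges incident to $v'$ along segments passing arbitrarily near $v'$. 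Carrying these vertex-edge visibilities over to $P$ via the hypothesis and Lemma~\ref{lem:defl-vis-thr-uniq}, a sandwiching argument in general position should then force $u$ to see $v$ in $P$.

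The second stage is where I expect the real difficulty. Having a compatible directed dual is now a vertex-edge invariant, so if $P$ is deflatable then $Q$ also has a compatible directed dual $D$; but concluding from this that $Q$ is deflatable is precisely the still-open converse of Theorem~\ref{thm:no-cdd-not-defl}, which must be avoided in full strength if the conjecture is to be genuinely weaker. I would instead attempt a constructive transfer: a deflation of $P$ yields, by Lemma~\ref{lem:mon-defm-ve-mon}, a monotonically decreasing schedule of vertex-edge visibilities ending at those of the deflated $P'$, and I would try to realize the same schedule by a deformation of $Q$, using Theorem~\ref{thm:defl-dd-eq} to reach a deflated polygon with directed dual $D$ once $Q$ has been brought into a configuration of that combinatorial type. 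The hard part is certifying that the transferred motion on $Q$ stays simple and monotonic throughout: Lemma~\ref{lem:mon-defm-ve-mon} controls vertex-edge visibilities only at the endpoints, and nothing in the shared vertex-edge data directly forbids a spurious vertex-vertex visibility from appearing midway through $Q$'s motion. Closing this gap, by showing that the vertex-edge schedule can be realized without creating transient vertex-vertex crossings in $Q$, is the crux, and is, I expect, the reason the statement remains only a conjecture.
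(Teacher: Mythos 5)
This statement has no proof in the paper to compare against: it is stated as an open conjecture, precisely because the converse of Theorem~\ref{thm:no-cdd-not-defl} (and hence any combinatorial characterization of deflatability) is unresolved. Your proposal does not close it either, as you concede in your final sentence, so the honest verdict is that both of your stages contain genuine gaps. Stage two \emph{is} the open problem: from ``$Q$ has a compatible directed dual'' you cannot conclude ``$Q$ is deflatable,'' and the constructive transfer you sketch lacks a mechanism for the one thing it needs. Theorem~\ref{thm:defl-dd-eq} deforms one \emph{already deflated} polygon into another; it gives no way to bring the non-deflated $Q$ into a deflated configuration. Lemma~\ref{lem:mon-defm-ve-mon} constrains visibilities only at the endpoints of a deformation, and a ``schedule'' of vertex-edge visibilities is not a deformation: realizing it on $Q$ while preserving simplicity and monotonicity---in particular suppressing transient vertex-vertex visibilities, which the shared vertex-edge data does not govern---is the entire difficulty, not a technical loose end.

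Stage one is also unproven, and its sketch contains a concrete false step. From ``$u'$ sees $v'$ in the deflated $P'$'' you correctly get that $u'$ sees both polygon edges incident to $v'$ (the segment $u'v'$ terminates at a point of each), and the hypothesis carries these two vertex-edge visibilities over to $P$. But the concluding ``sandwiching'' step---$u$ sees both edges incident to $v$, hence $u$ sees $v$---fails for general position polygons: a thin spike of the boundary lying between $u$ and $v$ can shadow $v$ together with a neighbourhood of $v$ on both incident edges, while both edges are long enough to emerge from the shadow and remain partially visible to $u$. Nothing in your subset hypothesis forbids $P$ from containing such an occluder, since $P$ may have arbitrarily many visibilities beyond those of $P'$; so any correct proof of your subset lemma must exploit the global structure of the embedding rather than this local argument. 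In short, your reduction through compatible directed duals is a reasonable framing (it is the one the paper itself suggests when discussing the converse of Theorem~\ref{thm:no-cdd-not-defl}), but neither stage is established, and the second is equivalent in difficulty to the conjecture you set out to prove.
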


We conclude, however, by noting that, in light of Mnev's Universality
Theorem \cite{Mnev88}, it is unknown if even the order type of a
polygon's vertex set determines its deflatability.

\section{Acknowledgements}

This research was partly funded by NSERC and by Carleton University
through an I-CUREUS internship.  We would also like to thank Joseph
O'Rourke for informing us of a property of point set order types.

{ \small
\bibliographystyle{abbrv}
\bibliography{poly-defl}{}

\begin{thebibliography}{1}

\bibitem{Abrego11}
B.~{\'A}brego, M.~Cetina, J.~Lea{\~n}os, and G.~Salazar.
\newblock Visibility-preserving convexifications using single-vertex moves.
\newblock {\em Information Processing Letters}, 112(5):161--163, 2012.

\bibitem{Aichholzer11b}
O.~Aichholzer, G.~Aloupis, E.~D. Demaine, M.~L. Demaine, V.~Dujmovi{\'c},
  F.~Hurtado, A.~Lubiw, G.~Rote, A.~Schulz, D.~L. Souvaine, and A.~Winslow.
\newblock Convexifying polygons without losing visibilities.
\newblock In {\em Proc. 23rd Annual Canadian Conference on Computational
  Geometry (CCCG)}, pages 229--234, 2011.

\bibitem{Aichholzer02}
O.~Aichholzer, F.~Aurenhammer, and H.~Krasser.
\newblock Enumerating order types for small point sets with applications.
\newblock {\em Order}, 19:265--281, 2002.
\newblock 10.1023/A:1021231927255.

\bibitem{Connely00}
R.~Connelly, E.~Demaine, and G.~Rote.
\newblock Straightening polygonal arcs and convexifying polygonal cycles.
\newblock In {\em Foundations of Computer Science, 2000. Proceedings. 41st
  Annual Symposium on}, pages 432--442, 2000.

\bibitem{Devadoss09}
S.~Devadoss, R.~Shah, X.~Shao, and E.~Winston.
\newblock Deformations of associahedra and visibility graphs.
\newblock {\em Contributions to Discrete Mathematics}, 7(1):68--81, 2012.

\bibitem{Ghosh07}
S.~K. Ghosh.
\newblock {\em Visibility Algorithms in the Plane}.
\newblock Cambridge University Press, 2007.

\bibitem{Goodman83}
J.~E. Goodman and R.~Pollack.
\newblock Multidimensional sorting.
\newblock {\em SIAM J. Comput.}, 12:484--507, 1983.

\bibitem{Mnev88}
N.~Mnev.
\newblock The universality theorems on the classification problem of
  configuration varieties and convex polytopes varieties.
\newblock In O.~Viro and A.~Vershik, editors, {\em Topology and Geometry --
  Rohlin Seminar}, volume 1346 of {\em Lecture Notes in Mathematics}, pages
  527--543. Springer Berlin / Heidelberg, 1988.
\newblock 10.1007/BFb0082792.

\bibitem{ORourke87}
J.~O'Rourke.
\newblock {\em Art Gallery Theorems and Algorithms}.
\newblock Oxford University Press, 1987.

\end{thebibliography}
}

\iffullversion

\appendix
\section{Hexagon Enumeration}
\label{apx:hexagons}
The following figures enumerate the hexagons with two or more reflex
vertices from the case analysis of Theorem~\ref{thm:hex-deflate}.  The
enumeration includes all such hexagons on general position vertex
sets, up to order type.  A point set is in general position if no
three of its points are incident to a common line.  The order type of
a point set is a combinatorial structure that encodes, for each
ordered triple of distinct points, whether they form a right or left
turn (see \cite{Goodman83}).

The hexagons were generated by, first, enumerating the Hamiltonian
cycles, up to traversal direction, on the complete geometric graphs of
each of the sixteen general position point sets of size six from the
online order type database of Aichholzer et
al.\footnote{\url{http://www.ist.tugraz.at/aichholzer/research/rp/triangulations/ordertypes/}}
(see \cite{Aichholzer02}).  The cycles were then filtered to remove
those with edge crossings (non-simple), those without visibility
crossings (deflated) and those with less than two reflex vertices.

The dotted segments in the figures join visible vertex pairs and the
arrows indicate a vertex and a single-segment path along which it may
be moved monotonically to reduce the number of visibility crossings of
its hexagon by at least one.  The shaded region of a figure is the
convex polygon $C$, as described in the proof of
Theorem~\ref{thm:hex-deflate}.

\noindent
\includegraphics{figs/hexagons-04-012354.mps}
\includegraphics{figs/hexagons-04-013254.mps}
\includegraphics{figs/hexagons-04-014235.mps}
\includegraphics{figs/hexagons-04-014325.mps}
\includegraphics{figs/hexagons-04-041235.mps}
\includegraphics{figs/hexagons-05-012345.mps}

\noindent
\includegraphics{figs/hexagons-05-012354.mps}
\includegraphics{figs/hexagons-05-013245.mps}
\includegraphics{figs/hexagons-05-013254.mps}
\includegraphics{figs/hexagons-05-031254.mps}
\includegraphics{figs/hexagons-05-041325.mps}
\includegraphics{figs/hexagons-05-043125.mps}

\noindent
\includegraphics{figs/hexagons-06-012354.mps}
\includegraphics{figs/hexagons-06-013254.mps}
\includegraphics{figs/hexagons-06-041235.mps}
\includegraphics{figs/hexagons-06-041325.mps}
\includegraphics{figs/hexagons-07-012345.mps}
\includegraphics{figs/hexagons-07-012354.mps}

\noindent
\includegraphics{figs/hexagons-07-013245.mps}
\includegraphics{figs/hexagons-07-013254.mps}
\includegraphics{figs/hexagons-07-031254.mps}
\includegraphics{figs/hexagons-07-041235.mps}
\includegraphics{figs/hexagons-07-041325.mps}
\includegraphics{figs/hexagons-07-043125.mps}

\noindent
\includegraphics{figs/hexagons-09-012453.mps}
\includegraphics{figs/hexagons-09-012543.mps}
\includegraphics{figs/hexagons-09-013245.mps}
\includegraphics{figs/hexagons-09-013425.mps}
\includegraphics{figs/hexagons-09-031245.mps}
\includegraphics{figs/hexagons-09-031254.mps}

\noindent
\includegraphics{figs/hexagons-09-031425.mps}
\includegraphics{figs/hexagons-10-012345.mps}
\includegraphics{figs/hexagons-10-012354.mps}
\includegraphics{figs/hexagons-10-013245.mps}
\includegraphics{figs/hexagons-10-013254.mps}
\includegraphics{figs/hexagons-10-041235.mps}

\noindent
\includegraphics{figs/hexagons-10-041325.mps}
\includegraphics{figs/hexagons-10-045312.mps}
\includegraphics{figs/hexagons-10-053124.mps}
\includegraphics{figs/hexagons-10-054312.mps}
\includegraphics{figs/hexagons-11-012354.mps}
\includegraphics{figs/hexagons-11-012534.mps}

\noindent
\includegraphics{figs/hexagons-11-012543.mps}
\includegraphics{figs/hexagons-11-013254.mps}
\includegraphics{figs/hexagons-11-041235.mps}
\includegraphics{figs/hexagons-11-043512.mps}
\includegraphics{figs/hexagons-12-012345.mps}
\includegraphics{figs/hexagons-12-012354.mps}

\noindent
\includegraphics{figs/hexagons-12-012534.mps}
\includegraphics{figs/hexagons-12-012543.mps}
\includegraphics{figs/hexagons-12-013245.mps}
\includegraphics{figs/hexagons-12-013254.mps}
\includegraphics{figs/hexagons-12-015234.mps}
\includegraphics{figs/hexagons-12-015243.mps}

\noindent
\includegraphics{figs/hexagons-12-031254.mps}
\includegraphics{figs/hexagons-12-034512.mps}
\includegraphics{figs/hexagons-12-041325.mps}
\includegraphics{figs/hexagons-12-043125.mps}
\includegraphics{figs/hexagons-12-045123.mps}
\includegraphics{figs/hexagons-12-051234.mps}

\noindent
\includegraphics{figs/hexagons-13-012453.mps}
\includegraphics{figs/hexagons-13-012543.mps}
\includegraphics{figs/hexagons-13-013245.mps}
\includegraphics{figs/hexagons-13-013425.mps}
\includegraphics{figs/hexagons-13-015234.mps}
\includegraphics{figs/hexagons-13-015243.mps}

\noindent
\includegraphics{figs/hexagons-13-031245.mps}
\includegraphics{figs/hexagons-13-031254.mps}
\includegraphics{figs/hexagons-13-031425.mps}
\includegraphics{figs/hexagons-13-031524.mps}
\includegraphics{figs/hexagons-13-034125.mps}
\includegraphics{figs/hexagons-13-043125.mps}

\noindent
\includegraphics{figs/hexagons-13-051243.mps}
\includegraphics{figs/hexagons-14-012345.mps}
\includegraphics{figs/hexagons-14-012354.mps}
\includegraphics{figs/hexagons-14-012534.mps}
\includegraphics{figs/hexagons-14-012543.mps}
\includegraphics{figs/hexagons-14-013254.mps}

\noindent
\includegraphics{figs/hexagons-14-015234.mps}
\includegraphics{figs/hexagons-14-015243.mps}
\includegraphics{figs/hexagons-14-031245.mps}
\includegraphics{figs/hexagons-14-031254.mps}
\includegraphics{figs/hexagons-14-031524.mps}
\includegraphics{figs/hexagons-14-034512.mps}

\noindent
\includegraphics{figs/hexagons-14-043125.mps}
\includegraphics{figs/hexagons-14-045123.mps}
\includegraphics{figs/hexagons-14-051234.mps}
\includegraphics{figs/hexagons-15-012435.mps}
\includegraphics{figs/hexagons-15-012453.mps}
\includegraphics{figs/hexagons-15-012543.mps}

\noindent
\includegraphics{figs/hexagons-15-013245.mps}
\includegraphics{figs/hexagons-15-013425.mps}
\includegraphics{figs/hexagons-15-014235.mps}
\includegraphics{figs/hexagons-15-015234.mps}
\includegraphics{figs/hexagons-15-015243.mps}
\includegraphics{figs/hexagons-15-015342.mps}

\noindent
\includegraphics{figs/hexagons-15-015432.mps}
\includegraphics{figs/hexagons-15-031245.mps}
\includegraphics{figs/hexagons-15-034512.mps}
\includegraphics{figs/hexagons-15-045123.mps}
\includegraphics{figs/hexagons-15-051234.mps}
\includegraphics{figs/hexagons-15-051324.mps}

\noindent
\includegraphics{figs/hexagons-15-051423.mps}
\includegraphics{figs/hexagons-15-051432.mps}
\includegraphics{figs/hexagons-15-054123.mps}
\includegraphics{figs/hexagons-16-012345.mps}
\includegraphics{figs/hexagons-16-012354.mps}
\includegraphics{figs/hexagons-16-013542.mps}

\noindent
\includegraphics{figs/hexagons-16-015342.mps}
\includegraphics{figs/hexagons-16-043512.mps}
\includegraphics{figs/hexagons-16-045132.mps}
\includegraphics{figs/hexagons-16-045312.mps}
\includegraphics{figs/hexagons-16-051324.mps}
\includegraphics{figs/hexagons-16-053124.mps}

\noindent
\includegraphics{figs/hexagons-16-054312.mps}
\fi

\end{document}